\documentclass[11pt,letterpaper]{article}
\usepackage[margin=1in]{geometry}
\usepackage[T1]{fontenc}
\usepackage[utf8]{inputenc}
\usepackage{lmodern}
\usepackage{amsmath,amssymb,amsthm,mathtools}
\usepackage{microtype}
\usepackage{cite}
\usepackage{authblk}

\usepackage{enumitem}
\usepackage{needspace}
\usepackage{xcolor}
\usepackage{xurl}
\definecolor{linkblue}{RGB}{28,61,112}
\usepackage[colorlinks=true,allcolors=linkblue,bookmarksnumbered=true]{hyperref}
\hypersetup{pdftitle={An Exponential Strong Converse for Quantum Channel Discrimination},
 pdfauthor={Minbo Gao, Zhengfeng Ji, Chenghua Liu},
 pdfsubject={Channel Stein lemma, uniform Stinespring approximation, and Renyi continuity},
 pdfkeywords={quantum channel discrimination, Stein lemma, strong converse, Stinespring dilation, subchannel smoothing}}
\usepackage{aliascnt}
\usepackage[capitalise,nameinlink,noabbrev]{cleveref}
\numberwithin{equation}{section}
\newtheorem{theorem}{Theorem}[section]
\newaliascnt{lemma}{theorem}
\newtheorem{lemma}[lemma]{Lemma}
\aliascntresetthe{lemma}
\newaliascnt{proposition}{theorem}
\newtheorem{proposition}[proposition]{Proposition}
\aliascntresetthe{proposition}
\newaliascnt{corollary}{theorem}
\newtheorem{corollary}[corollary]{Corollary}
\aliascntresetthe{corollary}
\theoremstyle{remark}
\newaliascnt{remark}{theorem}

\aliascntresetthe{remark}
\DeclareMathOperator{\Tr}{Tr}
\DeclareMathOperator{\supp}{supp}
\DeclareMathOperator{\ran}{ran}
\DeclareMathOperator{\vecop}{vec}
\newcommand{\N}{\mathcal N}
\newcommand{\M}{\mathcal M}
\newcommand{\id}{\mathrm{id}}
\newcommand{\cpo}{\le_{\mathrm{CP}}}
\newcommand{\op}[1]{\left\lVert#1\right\rVert_{\infty}}
\newcommand{\trn}[1]{\left\lVert#1\right\rVert_1}
\newcommand{\dn}[1]{\left\lVert#1\right\rVert_{\diamond}}
\newcommand{\ket}[1]{\lvert#1\rangle}
\newcommand{\bra}[1]{\langle#1\rvert}
\newcommand{\Dmax}{D_{\max}}
\newcommand{\binary}{h_2}
\setlist[enumerate]{itemsep=2pt,topsep=4pt,leftmargin=*}
\makeatletter
\renewcommand{\@maketitle}{%
 \begin{center}
  {\fontsize{17}{21}\selectfont\bfseries\@title\par}
  \ifx\@author\@empty\else\vspace{1em}
   {\normalsize\lineskip .5em\@author\par}\fi
  \ifx\@date\@empty\else\vspace{.7em}{\normalsize\@date\par}\fi
 \end{center}\vspace{.6em}}
\makeatother
\title{Quantum Channel Stein's Lemma with an Exponential Strong Converse}
\author[1,2]{Minbo Gao\thanks{\href{mailto:gmb17@tsinghua.org.cn}{\texttt{gmb17@tsinghua.org.cn}}}}
\author[3]{Zhengfeng Ji\thanks{\href{mailto:jizhengfeng@tsinghua.edu.cn}{\texttt{jizhengfeng@tsinghua.edu.cn}}}}
\author[1,2]{Chenghua Liu\thanks{\href{mailto:liuch.russell@gmail.com}{\texttt{liuch.russell@gmail.com}}}}
\affil[1]{Institute of Software, Chinese Academy of Sciences, Beijing, China}
\affil[2]{University of Chinese Academy of Sciences, Beijing, China}
\affil[3]{Department of Computer Science and Technology, Tsinghua University, Beijing, China}
\date{}
\begin{document}
\maketitle

\begin{abstract}
  We establish the quantum channel Stein lemma at every fixed type-I error
  tolerance in $(0,1)$ and prove an exponential strong converse under arbitrary
  adaptive strategies.
  For any finite-dimensional channel pair, the optimal type-II exponent equals
  the regularized channel relative entropy and is attained by parallel
  strategies.
  When this divergence is finite, every strictly larger type-II rate forces the
  probability of accepting the null hypothesis to decay exponentially, uniformly
  over strategies and quantum memories.
  The proof rests on a uniform Stinespring approximation: tensor powers of the
  null dilation are approximated by tensor powers of the alternative dilation
  after applying an auxiliary linear map to the environment.
  The operator-norm error is exponentially small under a squared-norm bound at
  any rate above the regularized relative entropy.
  We construct the approximation from a weak testing bound by iterative rate
  reduction and a fixed-block tensor expansion with an exact residual.
  It implies right continuity at order one of the regularized sandwiched R\'enyi
  channel divergence, which yields the adaptive converse through the channel
  chain rule.
  We also obtain a sharp hockey-stick threshold and a subchannel smoothing
  asymptotic equipartition property for fixed and subexponentially vanishing
  diamond-norm errors.
\end{abstract}

\smallskip

\section{Introduction}\label{sec:introduction}

Stein's lemma identifies relative entropy as the optimal type-II error exponent
in asymmetric hypothesis testing~\cite[Sec.~11.8]{CoverThomas}.
It is a basic result in the classical theory of error exponents~\cite{Blahut}.
Hiai and Petz established the quantum state version~\cite{HiaiPetz}, and Ogawa
and Nagaoka proved the strong converse~\cite{OgawaNagaoka}, giving the same
exponent at every fixed type-I error tolerance in $(0,1)$.
Later work determined the strong-converse exponent~\cite{MosonyiOgawa} and the
second-order asymptotics~\cite{Li,TomamichelHayashi}.
These results concern independent copies of a fixed pair of states, with
optimization over the final measurement.

Channel discrimination also optimizes how the states to be measured are
produced.
Even for a memoryless unknown channel, the experiment may use inputs entangled
across all channel uses and adaptive processing with a quantum memory.
The resulting output states need not be tensor powers of any fixed pair.
A converse must therefore control entire sequences of experiments, rather than
the repetition of a single state discrimination problem.

For channels $\N$ and $\M$, the natural candidate for the Stein exponent is the
regularized channel relative entropy
\begin{equation}
 d=D^\infty(\N\Vert\M)
   =\sup_{k\ge1}\frac1k D(\N^{\otimes k}\Vert\M^{\otimes k}),
 \label{eq:intro-d}
\end{equation}
where $D(\N\Vert\M)$ is the largest output relative entropy obtained from a
common input and an arbitrary reference.
Every rate below $d$ is achievable: choose a suitable input block, repeat it
independently, and apply the state Stein lemma.
The difficulty is the fixed-error converse.
If $\beta_n^*(\varepsilon)$ is the optimal parallel type-II error at type-I
error at most $\varepsilon$, binary data processing gives only
\begin{equation}
 -\frac1n\log\beta_n^*(\varepsilon)
 \le\frac{d+1/n}{1-\varepsilon}.
 \label{eq:intro-weak-converse}
\end{equation}
For $0<d<\infty$, this leaves a gap at every fixed $\varepsilon>0$.
Adaptive strategies add a further difficulty: the input at a given use can
already differ under the two hypotheses.

We prove that $d$ is the fixed-error Stein exponent for arbitrary
finite-dimensional channel pairs, under both parallel and general adaptive
strategies.
For finite $d$, we also establish an exponential strong converse: for every
$R>d$, any adaptive $n$-use test with type-II error at most $2^{-nR}$ accepts
the null hypothesis with probability at most $2^{-nc_R}$, where $c_R>0$ is
independent of the strategy and its memory dimensions.
Thus neither adaptive control nor the choice of a fixed error tolerance changes
the first-order exponent.
The precise statement is given in \cref{thm:adaptive}.

\paragraph{Related work.}
Hayashi studied classical channel discrimination with adaptive input
selection~\cite{Hayashi}.
For quantum channels, Cooney, Mosonyi, and Wilde determined strong-converse
exponents when the alternative is a replacer channel~\cite{CooneyMosonyiWilde}.
Wilde, Berta, Hirche, and Kaur developed amortized channel divergences for
adaptive converses and proved a strong Stein lemma for classical--quantum
channels~\cite{WildeEtAl}.
Wang and Wilde formulated the resource theory of asymmetric channel
distinguishability and established its weak Stein
characterization~\cite{WangWilde}.

For general channel pairs, Fang, Fawzi, Renner, and Sutter proved a
relative-entropy chain rule that identifies amortization with regularization and
determines the weak adaptive Stein exponent~\cite{FangEtAl}.
Bergh, Datta, Salzmann, and Wilde gave quantitative comparisons between adaptive
and parallel discrimination strategies for finitely many channel
uses~\cite{BerghEtAl}.
Fawzi and Fawzi established the corresponding sandwiched R\'enyi chain rule and
characterized the optimal adaptive strong-converse exponent~\cite{FawziFawzi}.
The remaining question was whether that exponent is strictly positive at every
rate above $d$.
As explained in~\cite[Remark~5.6]{FawziFawzi}, locating this threshold requires
right continuity at order one of the regularized sandwiched R\'enyi channel
divergence.
Fang, Gour, and Wang further developed the connections among weak Stein results,
exponential strong converses, continuity, and asymptotic equipartition
properties formulated through output-state smoothing~\cite{FangGourWang}.

Smoothing provides another formulation of the threshold problem.
For states, smooth entropies and max-relative entropy have asymptotic limits
governed by von Neumann and relative entropies~\cite{Datta,TCR,DMHB}.
For channels, the normalization of the approximating map matters.
Gour related subchannel smoothing to the parallel Stein strong converse and
sharp testing thresholds, while showing that smoothing over trace-preserving
channels can have a strictly larger asymptotic rate~\cite{Gour}.
That work also develops uniform completely positive approximations from testing
information.
Our smoothing result concerns subchannels and does not impose exact trace
preservation on the approximating maps.

\paragraph{Our approach.}
The central step is a uniform comparison of Stinespring dilations.
We approximate tensor powers of the null dilation by an auxiliary linear map on the
alternative environment, with exponentially small error and a squared-norm rate
arbitrarily close to $d$ from above.
The same map works for every input and reference system.
This is the additional control needed to pass from a weak testing estimate to
the fixed-error threshold: it bounds acceptance probabilities directly, proves
the R\'enyi continuity required by the adaptive chain rule, and yields dominated
subchannel approximations.
\Cref{sec:approx-statement} states and explains the approximation theorem, and
\cref{sec:proof-strategy} outlines its proof.

The results and proof overview appear in \cref{sec:setup}.
We derive the Stein theorem from the approximation in \cref{sec:main-proof},
then prove the approximation itself in \cref{sec:approximation}.
\Cref{sec:consequences} treats testing thresholds and subchannel smoothing, and
\cref{sec:discussion} discusses further questions.

\section{Main Results and Proof Overview}\label{sec:setup}

\subsection{Notation and discrimination model}\label{sec:model}

All Hilbert spaces are finite dimensional, and $n$ denotes a positive
integer. We write $\mathcal L(A,B)$ for linear operators from $A$ to
$B$, $\mathcal L(A)=\mathcal L(A,A)$, and $\mathcal S(A)$ for density
operators on $A$. A channel $\N:A\to B$ is a completely positive
trace-preserving (CPTP) map from $\mathcal L(A)$ to $\mathcal L(B)$.
For completely positive (CP) maps, $\Phi\cpo\Psi$ means that
$\Psi-\Phi$ is CP. We write $A^n=A^{\otimes n}$, identify tensor
products under the canonical permutations of their factors, and omit
identity subscripts only when the system is unambiguous.

The identity operator and identity channel on $A$ are $I_A$ and
$\id_A$. The adjoint, transpose in a fixed basis, and Moore--Penrose
inverse are $X^\dagger$, $X^{\mathsf T}$, and $X^+$, respectively.
For a subspace $\mathsf K$, $\Pi_{\mathsf K}$ is its orthogonal
projection. For Hermitian $X$, $X_+$ denotes its positive part. The operator and
trace norms are $\op X$ and $\trn X$; for vectors we use the Euclidean
norm $\|x\|$. Logarithms are base two
unless denoted by $\ln$, and $0\log0=0$. The binary entropy is
$\binary(x)=-x\log x-(1-x)\log(1-x)$ for $x\in[0,1]$.
We refer to~\cite{Watrous} for standard facts about quantum channels
and matrix norms.

For states $\rho,\sigma$, their relative entropy is
$D(\rho\Vert\sigma)=\Tr\rho(\log\rho-\log\sigma)$ when
$\supp\rho\subseteq\supp\sigma$, and is $+\infty$ otherwise.
An effect $0\le T\le I$ represents acceptance of the null hypothesis.
Its acceptance probabilities under $\rho$ and $\sigma$ are
$p=\Tr T\rho$ and $q=\Tr T\sigma$, respectively; the type-I and
type-II errors are $1-p$ and $q$. For $0<\varepsilon<1$, define
\begin{equation*}
 D_H^\varepsilon(\rho\Vert\sigma)
 =-\log\inf_{0\le T\le I}
       \{\Tr T\sigma:\Tr T\rho\ge1-\varepsilon\}.
\end{equation*}

A parallel $n$-use strategy may be taken to prepare a pure state
$\ket\psi_{RA^n}$ and measure $RB^n$ after the unknown channel has
acted on all inputs.
The two output states are
\begin{equation}
 \rho_n^\psi=(\id_R\otimes\N^{\otimes n})(\ket\psi\bra\psi),\qquad
 \sigma_n^\psi=(\id_R\otimes\M^{\otimes n})(\ket\psi\bra\psi).
 \label{eq:outputs}
\end{equation}
Purifying an input cannot reduce the divergences used here or worsen a
test, since
the purifying system may be ignored at the final measurement.
A pure input has Schmidt rank at most $\dim A^n$ across the
reference--input cut, so $R\simeq A^n$ suffices up to a reference
isometry. We use this reduction in all parallel optimizations below,
without restricting entanglement across channel uses.

We use the same divergence symbol for states and channels; the
arguments distinguish them. For a fixed reference $R\simeq A$,
optimize over unit vectors $\ket\psi_{RA}\in R\otimes A$:
\begin{equation}
 D(\N\Vert\M)
 =\sup_{\ket\psi_{RA}}D\bigl((\id_R\otimes\N)(\ket\psi\bra\psi)
                 \Vert(\id_R\otimes\M)(\ket\psi\bra\psi)\bigr).
 \label{eq:channel-d}
\end{equation}
Set $a_n=D(\N^{\otimes n}\Vert\M^{\otimes n})$. Product inputs and
additivity of state relative entropy give $a_{n+m}\ge a_n+a_m$;
hence Fekete's lemma yields
\begin{equation}
 d=D^\infty(\N\Vert\M)
  =\lim_{n\to\infty}\frac{a_n}{n}
  =\sup_{n\ge1}\frac{a_n}{n},
 \label{eq:d-setup}
\end{equation}
where $+\infty$ is allowed. The optimal parallel type-II error is
\begin{equation}
 \beta_n^*(\varepsilon)
 =\inf_{\ket\psi,\,0\le T\le I}
   \{\Tr T\sigma_n^\psi:\Tr T\rho_n^\psi\ge1-\varepsilon\}.
 \label{eq:beta-n}
\end{equation}
Here $R\simeq A^n$ is fixed, $\ket\psi\in R\otimes A^n$ is a unit vector,
and $T$ acts on $R\otimes B^n$. The feasible set is compact,
so the infimum is attained.

An adaptive strategy alternates the unknown channel with arbitrary
CPTP maps on the preceding output and a quantum memory. Its initial
state and intermediate maps are the same under both hypotheses.
Intermediate measurements and classical feedback are included by
storing outcomes in the memory. Each memory is finite dimensional,
but its dimension is unrestricted and may depend on $n$.
We write $\beta_{n,\mathrm{ad}}^*(\varepsilon)$ for the corresponding
infimum of the type-II error under the same type-I constraint.
This is the standard adaptive model used in~\cite{WildeEtAl,WangWilde}.
For either strategy class, $p$ and $q$ denote the probabilities of
accepting the null hypothesis when the channel is $\N$ and $\M$,
respectively.

\subsection{The Stein threshold}\label{sec:stein-statement}

The first theorem determines the fixed-error exponent and gives an
exponential bound above it. The bound applies to every adaptive strategy,
including those with memory dimensions that grow with the blocklength.

\begin{theorem}[Channel Stein lemma and exponential strong converse]
\label{thm:adaptive}
Let $\N,\M:A\to B$ be channels and set $d=D^\infty(\N\Vert\M)$.
For every $0<\varepsilon<1$,
\begin{equation}
 \lim_{n\to\infty}-\frac1n\log\beta_{n,\mathrm{ad}}^*(\varepsilon)
 =\lim_{n\to\infty}-\frac1n\log\beta_n^*(\varepsilon)=d.
 \label{eq:adaptive-stein}
\end{equation}
If $d<\infty$, then for every $R>d$ there is $c_R>0$ such that
all adaptive $n$-use tests satisfy
\begin{equation}
 q\le2^{-nR}\quad\Longrightarrow\quad p\le2^{-nc_R}
 \qquad(n\ge1).
 \label{eq:adaptive-power}
\end{equation}
\end{theorem}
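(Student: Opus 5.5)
The proof splits into achievability, the parallel exponential strong converse, and the adaptive extension, with $d<\infty$ the main case.

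\emph{Achievability.} Fix $k$ and a pure input $\ket\psi$ achieving $a_k=D(\N^{\otimes k}\Vert\M^{\otimes k})$. Feed $\ket\psi^{\otimes m}$ into $n=mk+r$ uses, discarding the remaining $r<k$ uses. The state Stein lemma of Hiai--Petz then gives, for every fixed $\varepsilon\in(0,1)$,
\[
\liminf_n -\tfrac1n\log\beta_n^*(\varepsilon)\ge a_k/k .
\]
Taking the supremum over $k$ and using \eqref{eq:d-setup} gives the lower bound $d$, including the case $d=\infty$. Since parallel strategies are a special case of adaptive ones, $\beta_{n,\mathrm{ad}}^*\le\beta_n^*$, so the same lower bound holds in the adaptive setting. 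It therefore suffices to prove the upper bound, and this follows from \eqref{eq:adaptive-power}: if $q\le 2^{-nR}$ with $R>d$, then $p\le 2^{-nc_R}<1-\varepsilon$ for large $n$, so no test with type-II rate $R$ is feasible.

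\emph{Exponential strong converse via a sandwiched R\'enyi bound.} Let $\widetilde D_\alpha$ denote the sandwiched R\'enyi divergence with $\alpha>1$. Data processing under the binary measurement $\{T,I-T\}$ yields
\[
\log p\le \frac{\alpha-1}{\alpha}\bigl(\widetilde D_\alpha(\rho\Vert\sigma)+\log q\bigr).
\]
For adaptive strategies, I would invoke the Fawzi--Fawzi chain rule, which bounds $\widetilde D_\alpha$ of the final adaptive outputs by $n\,\widetilde D_\alpha^{\infty}(\N\Vert\M)$, the regularized (amortized equals regularized) channel quantity, independently of memory dimensions. With $q\le 2^{-nR}$ this gives
\[
p\le 2^{-n\frac{\alpha-1}{\alpha}(R-\widetilde D^\infty_\alpha)} .
\]
It therefore suffices to show right continuity at $\alpha=1$:
\[
\lim_{\alpha\downarrow1}\widetilde D^\infty_\alpha(\N\Vert\M)=d .
\]
Granting this, pick $\alpha>1$ with $\widetilde D^\infty_\alpha<R$ and set $c_R=\frac{\alpha-1}{\alpha}(R-\widetilde D^\infty_\alpha)$.

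\emph{Continuity, the main obstacle.} The difficulty is that $\widetilde D^\infty_\alpha$ is an infimum-over-$\alpha$ of suprema-over-$k$, so the limits need not exchange. I would follow the uniform Stinespring approximation announced in the introduction. Write $V,W$ for the Stinespring isometries of $\N,\M$. For rate $r>d$, construct linear maps $L_n$ on the environment of $W^{\otimes n}$ such that
\[
\op{V^{\otimes n}-(I\otimes L_n)W^{\otimes n}}\le 2^{-n\delta},\qquad \op{L_n}^2\le 2^{nr}.
\]
From such a map, every output $\rho_n^\psi$ is within trace distance $2^{-n\delta}$ of a positive operator $\tilde\rho\le 2^{nr}\sigma_n^\psi$-type domination, uniformly over $\psi$. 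A standard smoothing-to-R\'enyi argument then gives
\[
\widetilde D_\alpha(\N^{\otimes n}\Vert\M^{\otimes n})\le nr+O\bigl((\alpha-1)n\bigr)+o(n)
\]
for $\alpha$ close to $1$, which yields continuity.

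Building $L_n$ is the hard part. I would start from a weak testing bound: the weak converse \eqref{eq:intro-weak-converse} controls the large-deviation acceptance. I would then reduce the rate iteratively by splitting into fixed blocks $k$ and expanding the tensor power $V^{\otimes n}$ as an approximated part plus an exact residual, bounding the residual's operator norm by a binomial tail that decays exponentially.
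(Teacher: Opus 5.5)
Your outer architecture is the paper's: block repetition with the state Stein lemma for achievability, then the Fawzi--Fawzi chain rule to reduce the adaptive converse to right continuity of $\widetilde D_\alpha^\infty$ at $\alpha=1$, then continuity via a uniform Stinespring approximation. The problem is that the proof actually lives in the two steps you compress into one sentence each, and neither goes through as written.

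\textbf{The R\'enyi step.} There is no standard argument that bounds $\widetilde D_\alpha$, $\alpha>1$, by trace-distance smoothing alone. A state $\rho$ can be $\epsilon$-close to some $\tilde\rho\le2\sigma$ while placing weight $\epsilon$ on an eigenvector of $\sigma$ with eigenvalue $\mu$. Then $\widetilde Q_\alpha(\rho\Vert\sigma)\ge\epsilon^\alpha\mu^{1-\alpha}$, which is unbounded as $\mu\to0$, and outputs of arbitrary inputs have no eigenvalue lower bound.

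The missing ingredient is that the residual must itself be dominated by $\sigma$ at a finite rate. The paper gets this from the exact map $C_*$, with $V_N=(I_B\otimes C_*)V_M$ and $\op{C_*}^2\le2^L$. The residual purification matrix is then $G(C_*^{\otimes n}-C_n)^{\mathsf T}$, so $F_1F_1^\dagger\le4\cdot2^{nL}\sigma$. Two tools then finish the bound: $\widetilde Q_\alpha(\tau\Vert\sigma)\le t^{\alpha-1}\Tr\tau$ for $0\le\tau\le t\sigma$, and the Schatten $2\alpha$-norm triangle inequality on purification matrices, which absorbs the cross terms. Together they give $\widetilde D_\alpha^\infty\le\max\{S,\,L-2\gamma/((\alpha-1)\ln2)\}$ when the approximation error is $Ke^{-\gamma n}$. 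The exponential decay is exactly what beats the factor $2^{nL(\alpha-1)}$.

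Your trace-distance formulation can be rescued, but only by adding a domination input of this kind. Use $\rho\le2^{nL}\sigma$ from finite domination, split $\sqrt\rho=\sqrt{\tilde\rho}+F_1$, note $F_1^2\le2\rho+2\tilde\rho$, and bound $\|F_1\|_2^2$ by Powers--St{\o}rmer.

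\textbf{The construction of $L_n$.} Two further ideas are missing.
\begin{enumerate}
\item A testing bound, $\sup(p-2^{nr}q)\le(nd+1)/(nr)$ over all inputs and effects, has to be turned into a single environment map $C$ with $\op C^2\le2^{nr}$ and error at most some $\delta<1$ for every input simultaneously. The paper does this with semidefinite duality for the channel hockey-stick quantity combined with Douglas factorization, giving $s_t^2\le E_t$ in \cref{prop:environment}. Your proposal has nothing in this place.
\item Your rate reduction fails as described. The weak approximation $X$ has error $\delta>\sqrt{d/r}$, which is close to one when $r$ is near $d$. An exact residual $H=V^{\otimes k}-X$ costs the domination rate $L$. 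The tail bound $\sum_{j>\eta m}\binom mj(1+\delta)^{m-j}\delta^j$ can be small only if $\eta>\delta/(1+2\delta)$, since that is where the terms of $(1+2\delta)^m$ concentrate. So the cost stays near $(1-\eta)r+\eta L$, a fixed fraction of the way to $L$. Iterating does not help, because every exact residual costs $L$ again.
\end{enumerate}

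The paper's two-rate step avoids this by using the correction $H=Y-X$, where $Y$ is the current accurate approximation at rate $S_j$. The rate then contracts as $S_{j+1}\approx(1-\beta)r+\beta S_j$. Because $Y$ is not exact, the internal blocklength must grow like $\sqrt N$ to control $\op{Y^{\otimes m}-V^{\otimes nm}}$. This yields only stretched-exponential error after finitely many rounds. Only once arbitrarily small error is available at a rate near $r$ does the fixed-block expansion with exact residual give exponential error, at an arbitrarily small extra rate.
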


In \eqref{eq:adaptive-power}, $c_R$ depends only on the channels and
$R$. Every nonnegative rate below $d$ is attained by repeating a suitable
fixed input block, with type-I error tending to zero. Above $d$, the
probability of accepting the null hypothesis tends to zero exponentially.
The fixed-error threshold is consequently independent of both the chosen
tolerance and access to adaptive control.

The limits in \eqref{eq:adaptive-stein} are understood in the extended
real sense. When $d=+\infty$, parallel tests have zero type-II error
at all sufficiently large blocklengths; \cref{prop:infinite} gives
an explicit construction. For replacer channels $\N(X)=\Tr(X)\rho$ and
$\M(X)=\Tr(X)\sigma$, the threshold is $D(\rho\Vert\sigma)$.
If $\N=\M$, then $p=q$ and $\beta_n^*(\varepsilon)=1-\varepsilon$.

\subsection{Uniform Stinespring approximation}\label{sec:approx-statement}

Our main technical result concerns the dilations of the two channels.
A Stinespring isometry $V_N:A\to B\otimes\mathsf E_N$ represents $\N$
as $\N(X)=\Tr_{\mathsf E_N}V_NXV_N^\dagger$~\cite{Stinespring};
choose an analogous dilation $V_M$ of $\M$.
Rather than compare the output states separately for each input, we
approximate the dilation itself. Specifically, we seek an auxiliary linear map
$C_n$ between the two environment spaces such that
$(I_{B^n}\otimes C_n)V_M^{\otimes n}$ is close to $V_N^{\otimes n}$.
The following theorem controls both the approximation error and the
norm of this map.

\begin{theorem}[Exponential Stinespring approximation]
\label{thm:approximation}
Let $\N,\M:A\to B$ be channels with $d=D^\infty(\N\Vert\M)<\infty$,
and fix Stinespring isometries $V_N:A\to B\otimes\mathsf E_N$ and
$V_M:A\to B\otimes\mathsf E_M$. For every $S>d$, there are
$K,\gamma>0$ and auxiliary linear maps
$C_n:\mathsf E_M^{\otimes n}\to\mathsf E_N^{\otimes n}$,
for all sufficiently large $n$, such that $\op{C_n}\le2^{nS/2}$ and
\begin{equation}
 \op{V_N^{\otimes n}-(I_{B^n}\otimes C_n)V_M^{\otimes n}}
 \le K e^{-\gamma n}.
 \label{eq:strong-error}
\end{equation}
\end{theorem}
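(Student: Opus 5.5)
Our plan follows the route sketched in the abstract: first a weak, inverse-polynomial approximation obtained from testing, then amplification to exponential accuracy by iterative rate reduction and a fixed-block tensor expansion.

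\emph{Weak approximation.} Fix a rate $S_0$ strictly between $d$ and $S$ and a block length $m$. We seek $C$ on $\mathsf E_M^{\otimes m}$ with
\begin{equation*}
 \op{C}^2\le 2^{mS_0},\qquad \op{V_N^{\otimes m}-(I\otimes C)V_M^{\otimes m}}\le\delta<1 .
\end{equation*}
The natural candidate is a Petz-type or pretty-good transfer. Take $C=P(\Pi\,\Gamma)$, where $\Gamma$ is the environment-side operator that maps the complementary output of $V_M$ to that of $V_N$, and $\Pi$ cuts off the spectral part where the likelihood ratio exceeds $2^{mS_0}$.

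Uniformity over inputs is handled by minimax. The error $\op{\cdot}$ is the worst case over inputs $\ket\psi_{RA^m}$ of a quantity $\Tr(\cdot)$ that is convex in the map and concave in the state. Sion's theorem therefore reduces the problem to a fixed purified input. For a fixed input, the squared error equals the weight of $\rho^\psi_m$ on the region where $\rho\not\le 2^{mS_0}\sigma$, up to constants. This weight is small because
\begin{equation*}
 D(\rho^\psi_m\Vert\sigma^\psi_m)\le m d < mS_0,
\end{equation*}
and an information-spectrum/Chebyshev or Markov-type bound, the weak testing bound of the introduction, controls it. This yields an error that is small but only like $O(1/m)$, not exponential.

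\emph{Exponential amplification.} We write
\begin{equation*}
 V_N^{\otimes m}=(I\otimes C)V_M^{\otimes m}+E,\qquad \op{E}\le\delta .
\end{equation*}
For $n=km+r$, we take $C_n=C^{\otimes k}\otimes C'_r$, where $C'_r$ is a fixed crude map on the remainder, such as the result for small blocks, giving constant factors. Expanding $V_N^{\otimes km}=(W+E)^{\otimes k}$ with $W=(I\otimes C)V_M^{\otimes m}$ gives the exact residual
\begin{equation*}
 \sum_{\emptyset\ne J}W^{\otimes J^c}\otimes E^{\otimes J}.
\end{equation*}
Naively each term is bounded by $\op{W}^{k-|J|}\delta^{|J|}$, but $\op{W}$ can be as large as $2^{mS_0/2}$, so this sum does not decay.

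\textbf{This is the main obstacle.} The fix we would try first is iterative rate reduction. We do not bound $W$ crudely. Instead we note that $W^{\otimes J^c}\otimes E^{\otimes J}=(V_N-E)^{\otimes J^c}\otimes E^{\otimes J}$, re-expand, and exploit that $\op{V_N}=1$. The terms are then organized as a telescoping sum
\begin{equation*}
 \sum_j V_N^{\otimes (j-1)}\otimes E\otimes W^{\otimes(k-j)}.
\end{equation*}
Alternatively, we choose $C$ so that $\op{W}\le 1+\eta$, by truncating on the $V_M$-side so that $(I\otimes C)V_M^{\otimes m}$ is a contraction up to $\eta$. The CP domination $\Tr_{\mathsf E}W\cdot W^\dagger\le$ something suitable would be used to keep the norm near one. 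Each step of rate reduction lowers the excess of the admissible exponent while keeping the error below a fixed constant.

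\emph{Concentration and parameter choice.} With near-contractive factors, the residual is a sum over the set $J$ of bad blocks. Its terms can be bounded as in a binomial tail: selecting $J$ with $|J|\ge k/2$ contributes at most $\binom{k}{|J|}\delta^{|J|}$, which is exponentially small once $\delta<1/4$. Terms with small $|J|$ must instead be shown to cancel, or to be absorbed into a modified map $C_n$, for instance by projecting onto typical subspaces of the $E$-blocks.

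The rate bound follows from
\begin{equation*}
 \op{C_n}\le\op{C}^k\op{C'_r}\le 2^{kmS_0/2}\,O(1)\le 2^{nS/2}
\end{equation*}
for large $n$, since $S_0<S$. The error is $Ke^{-\gamma n}$ with $\gamma$ proportional to $(1/m)\log(1/(4\delta))$.
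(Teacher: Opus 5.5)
\textbf{There is a genuine gap in the step from constant error to exponential error.}

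The obstacle you name is not the real one. Since $W$ lies within $\delta$ of the isometry $V_N^{\otimes m}$, you automatically have $\op W\le1+\delta$. The actual obstruction is that the low-order terms of $\sum_{J\ne\emptyset}W^{\otimes J^c}\otimes E^{\otimes J}$ neither cancel nor decay. Already on a product input $\psi^{\otimes k}$ with $E\psi\neq0$, the error of $W^{\otimes k}$ does not tend to zero. Consequences:
\begin{itemize}
\item No map of the form $C^{\otimes k}\otimes C'_r$ works.
\item Your telescoping rewrite only gives the growing bound $\sum_j\delta(1+\delta)^{k-j}=(1+\delta)^k-1$.
\item Your closing estimate $\op{C_n}\le\op C^k\op{C'_r}$ is about exactly the map that fails.
\end{itemize}
Absorbing the $E$-terms into $C_n$ is the right instinct. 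But it is only possible if $E$ is itself of the form $(I\otimes C_E)V_M^{\otimes m}$ with controlled $\op{C_E}$, and nothing in your proposal supplies this. The missing ingredient is the exact environmental map. Finite $d$ gives $\N\cpo2^L\M$, and Douglas factorization then yields $V_N=(I_B\otimes C_*)V_M$ with $\op{C_*}\le2^{L/2}$ in the prescribed environments (\cref{lem:finite,cor:exact}). With this, $E$ is represented by $C_*^{\otimes m}-C$, whose norm is at most $2\cdot2^{mL/2}$.

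Even with this representation, the rate accounting forces a two-stage scheme. Each retained residual factor costs rate $L$. Since $\delta$ is close to one, your $\delta<1/4$ is not available at rates near $d$. The omitted tail $\sum_{j>\beta k}\binom kj(1+\delta)^{k-j}\delta^j$ then decays only if the retained fraction $\beta$ is close to one. So a single truncation lowers the rate only to about $(1-\beta)S_0+\beta L$. This is why the paper proceeds in two stages (\cref{lem:amplification,lem:exponential}):
\begin{itemize}
\item It truncates $(X+H)^{\otimes m}$, with $X$ the cheap map and $H=Y-X$ for an accurate higher-rate $Y$. Finitely many iterations push the rate from $L$ to anything above $r$, with stretched-exponential error.
\item It then fixes one accurate block $X$ whose exact residual $H=V_N^{\otimes k}-X$ satisfies $(X+H)^{\otimes m}=V_N^{\otimes km}$. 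Truncating this expansion gives error $Ke^{-\gamma n}$.
\end{itemize}

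Your weak step also overstates its output. From $D(\rho\Vert\sigma)\le md$, a Markov-type bound gives only $E_m(S_0)\le d/S_0+1/(mS_0)$. This is a constant tending to one as $S_0\downarrow d$, not $O(1/m)$. A Chebyshev bound would need variance control, which inputs entangled across the $m$ uses do not provide.

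Your Sion minimax reduction to a fixed input is a legitimate alternative to the paper's SDP duality. It must, however, deliver the constant-free bound $s_t^2\le E_t$ of \cref{prop:environment}. Losing a factor ``up to constants'' would destroy $\delta<1$ at rates near $d$, and then no amplification of this kind can start.
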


Write $\widehat V_n=(I_{B^n}\otimes C_n)V_M^{\otimes n}$ and
$e_n=\op{V_N^{\otimes n}-\widehat V_n}$. For any unit vector on
$R\otimes A^n$, the two operators give output vectors at Euclidean
distance at most $e_n$, because tensoring with $I_R$ preserves the
operator norm. The same auxiliary map therefore works uniformly over all
inputs and reference dimensions.

The norm constraint determines how this approximation can be used in a
test. Applying $C_n$ can increase an acceptance amplitude by a factor of at
most $\op{C_n}$, so its squared norm is the corresponding probability
factor. This explains the normalization $\op{C_n}^2\le2^{nS}$.
 Stinespring continuity estimates~\cite{KSW} provide related comparisons
 between dilation and channel distance. Here we use the direct uniform
 output bound above, because $C_n$ is an auxiliary map
 whose norm may grow exponentially with $n$.

Finite divergence already gives an exact comparison
$V_N=(I_B\otimes C_*)V_M$ with $\op{C_*}\le2^{L/2}$ at some finite
domination rate $L\ge d$, where $L=\log c$ for any finite $c\ge1$
with $\N\cpo c\M$; see \cref{lem:finite,cor:exact}. The gain in
\cref{thm:approximation} is that allowing an exponentially small error
reduces the required rate to any $S>d$, including $S<L$.
Moreover, $C_n$ and $C_*^{\otimes n}$ act between the same prescribed
environments. Their difference is therefore an exact auxiliary map for
the residual $V_N^{\otimes n}-\widehat V_n$. This fact is used both in
the tensor construction and in the R\'enyi estimates.

The approximation also controls the channel after the environment is
discarded. Dividing $\widehat V_n$ by $1+e_n$ makes it a contraction,
which defines a completely positive, trace-nonincreasing map, or
subchannel. This map is exponentially close to $\N^{\otimes n}$ in
diamond norm and is dominated, in completely positive order, by
$2^{nS}\M^{\otimes n}$. Thus approximation and domination can be
imposed simultaneously at any rate above $d$. We prove this consequence
in \cref{sec:subchannels} and use it to determine the subchannel smoothing
rate. We discuss possible quantitative refinements of this approximation
in \cref{sec:discussion}.

\subsection{Overview of the proof}\label{sec:proof-strategy}

We first derive the statistical results from the approximation, then
explain its construction. The construction uses only a weak testing bound.

\paragraph{From approximation to testing.}
For a parallel input and a binary test, the approximation in
\cref{thm:approximation} gives
\begin{equation*}
 \sqrt p\le\op{C_n}\sqrt q+e_n.
\end{equation*}
Indeed, the test acts on the output and reference, whereas $C_n$ acts
on the environment. If $d<S<R$ and $q\le2^{-nR}$, this bounds
$\sqrt p$ by $2^{-n(R-S)/2}+Ke^{-\gamma n}$. The parallel strong
converse follows, uniformly over all inputs and tests.

For adaptive strategies, we use the approximation to prove R\'enyi
continuity. Each null output purification is the sum of an approximate
part with cost rate $S$ and a small residual with a finite
cost rate $L$. A weighted Schatten-norm estimate then bounds the
regularized R\'enyi divergence by $S$ for orders sufficiently close to
one. Since $S>d$ is arbitrary, this proves continuity at order one. The Fawzi--Fawzi chain rule~\cite{FawziFawzi}
then bounds the divergence increase at each adaptive use. Choosing the
order sufficiently close to one gives a positive converse exponent at
every $R>d$.

\paragraph{Constructing the approximation.}
For any $r>d$, relative-entropy data processing bounds the supremum of
$p-2^{nr}q$ over parallel inputs and tests by $d/r+o(1)$.
Semidefinite duality converts this into a uniform approximation using
an auxiliary map of norm at most $2^{nr/2}$ and with error at most a constant
$\delta<1$. We also have the exact auxiliary map at rate $L$. The task is
to obtain the accuracy of the exact representation at a rate close to
$r$.

Directly repeating the low-rate approximation can accumulate error.
Instead, let $X$ be such an approximation and $Y$ an accurate one at a
higher rate. Write $Y=X+H$ and expand $(X+H)^{\otimes m}$. Keeping
only terms with at most a suitable fraction of $H$ factors lowers the
cost rate. Since $\op H<1$, choosing this fraction sufficiently
close to one makes the omitted tail exponentially small in $m$.
The internal blocklength is also allowed to grow, which controls the
error in replacing $Y^{\otimes m}$ by the target. A fixed number of
iterations gives a stretched-exponential error at every rate above $r$;
this is \cref{lem:amplification}.

We then choose a sufficiently accurate block $X$ of length $k$ and keep $k$
fixed.
Its exact residual $H=V_N^{\otimes k}-X$ satisfies
$(X+H)^{\otimes m}=V_N^{\otimes km}$.
Because $\op H$ can be made arbitrarily small, using only a small fraction of
residual factors achieves exponential accuracy with an arbitrarily small
increase in rate.
There is now no error from repeating an approximate target.
\Cref{lem:exponential} formalizes this second construction.

\section{Proof of the Channel Stein Theorem}\label{sec:main-proof}

We use \cref{thm:approximation} to prove the parallel converse,
R\'enyi continuity, and the adaptive converse, in that order. We first
record the algebraic facts needed in these deductions. The approximation
theorem is proved independently in \cref{sec:approximation}, using only
a weak testing bound.

\subsection{Support and exact environmental comparison}\label{sec:algebra}

For a linear map $\Phi:A\to B$, its unnormalized Choi operator is
$J_\Phi=(\id_A\otimes\Phi)(\ket\Omega\bra\Omega)$, where
$\ket\Omega=\sum_{i=1}^{\dim A}\ket i\ket i$ and the first copy of
$A$ is a reference. The Choi criterion states that $\Phi$ is CP
exactly when $J_\Phi\ge0$; a channel also satisfies
$\Tr_BJ_\Phi=I_A$~\cite{Choi}. This turns finite divergence into an
operator-order condition.

\begin{lemma}[Finite domination and regularization]\label{lem:finite}
For channels $\N,\M:A\to B$, finiteness of $D^\infty(\N\Vert\M)$
is equivalent to $\supp J_\N\subseteq\supp J_\M$, and to
$\N\cpo c\M$ for some finite $c\ge1$.
\end{lemma}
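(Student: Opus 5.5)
We prove the cycle of implications: (c) $\N\cpo c\M$ for some finite $c\ge1$ $\Rightarrow$ (a) $D^\infty(\N\Vert\M)<\infty$ $\Rightarrow$ (b) $\supp J_\N\subseteq\supp J_\M$ $\Rightarrow$ (c).

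For (c)$\Rightarrow$(a), we tensor the domination. Since $c\M-\N$ is CP, the difference $c^n\M^{\otimes n}-\N^{\otimes n}$ telescopes as
\[
\sum_{j}\N^{\otimes j}\otimes(c\M-\N)\otimes(c\M)^{\otimes(n-j-1)},
\]
which is a sum of tensor products of CP maps and hence CP. Thus $\N^{\otimes n}\cpo c^n\M^{\otimes n}$. Apply this to any pure input $\ket\psi$ on $R\otimes A^n$. It gives $\rho_n^\psi\le c^n\sigma_n^\psi$, so operator monotonicity of the logarithm yields
\[
D(\rho_n^\psi\Vert\sigma_n^\psi)\le\Dmax(\rho_n^\psi\Vert\sigma_n^\psi)\le n\log c.
\]
Hence $a_n\le n\log c$, and by \eqref{eq:d-setup} we get $d\le\log c<\infty$.

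For (a)$\Rightarrow$(b), finiteness of $d=\sup_n a_n/n$ gives $a_1<\infty$. We then take the maximally entangled input $\ket\Omega/\sqrt{\dim A}$ in \eqref{eq:channel-d}. Its outputs are $J_\N/\dim A$ and $J_\M/\dim A$, and their relative entropy is at most $a_1<\infty$. Finite relative entropy forces $\supp J_\N\subseteq\supp J_\M$.

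For (b)$\Rightarrow$(c), both Choi operators are positive semidefinite and act on a finite-dimensional space. If $\supp J_\N\subseteq\supp J_\M$, then $J_\N\le\lambda_{\max}(J_\N)\,\Pi_{\supp J_\M}$. In turn, $\Pi_{\supp J_\M}\le\lambda_{\min}^+(J_\M)^{-1}J_\M$, where $\lambda_{\min}^+$ is the smallest nonzero eigenvalue. So $J_\N\le cJ_\M$ with
\[
c=\max\bigl\{1,\;\lambda_{\max}(J_\N)/\lambda_{\min}^+(J_\M)\bigr\}.
\]
By linearity of $\Phi\mapsto J_\Phi$ and the Choi criterion, $c\M-\N$ is CP.

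No step is a serious obstacle. The only points needing care are the tensorization of CP order in (c)$\Rightarrow$(a) and the choice of the maximally entangled input in (a)$\Rightarrow$(b). That input is admissible because $R\simeq A$ is fixed in \eqref{eq:channel-d}.
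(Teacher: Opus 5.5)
Your proof is correct and follows essentially the same route as the paper. Like the paper, you tensorize the CP order (your telescoping sum is the unrolled form of its induction), bound $D$ by $n\log c$ on every output pair, and use the maximally entangled input to recover Choi support inclusion. The differences are minor: you cite $D\le\Dmax$ where the paper runs the $\eta$-regularized operator-monotonicity argument explicitly, and your eigenvalue-ratio constant is cruder than the paper's optimal $c=\op{(J_\M^+)^{1/2}J_\N(J_\M^+)^{1/2}}$ (which the paper later uses to define $L$), though any finite $c$ suffices for the lemma as stated.
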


Under these equivalent conditions, the proof gives the explicit choice
$c=\op{(J_\M^+)^{1/2}J_\N(J_\M^+)^{1/2}}$.
We write $L=\log c$ for this domination rate; it satisfies
$0\le a_n\le nd\le nL$. If support inclusion fails, then
$a_n=+\infty$ for every $n$.

\begin{proof}
  The Choi criterion and finite-dimensional matrix order identify support
  inclusion with domination by a finite multiple.
  Under support inclusion, the stated value of $c$ gives $J_\N\le cJ_\M$; taking
  traces shows $c\ge1$.
  To see why this domination persists under tensor powers, let
  $\Phi\cpo\Psi$ be CP maps. The identity
  \begin{equation*}
    \Psi^{\otimes(n+1)}-\Phi^{\otimes(n+1)}
    =(\Psi^{\otimes n}-\Phi^{\otimes n})\otimes\Psi
      +\Phi^{\otimes n}\otimes(\Psi-\Phi)
  \end{equation*}
  shows the induction step: if the difference at $n$ is CP, then both
  terms on the right are CP, since tensor products of CP maps are CP.
  Starting with $\Psi-\Phi$ therefore proves
  $\Phi^{\otimes n}\cpo\Psi^{\otimes n}$ for every $n$.
  Applying this with $\Phi=\N$ and $\Psi=c\M$ gives
  $\N^{\otimes n}\cpo c^n\M^{\otimes n}$.

  Complete positivity allows us to tensor this difference with $\id_R$
  and apply it to any input $\ket\psi\bra\psi$. Thus every output pair
  satisfies $\rho_n^\psi\le c^n\sigma_n^\psi$, which also implies
  $\supp\rho_n^\psi\subseteq\supp\sigma_n^\psi$.
  Work on $\supp\sigma_n^\psi$ and let $I$ denote its identity operator.
  Adding $\eta I$ for $\eta>0$ makes both states strictly positive on
  this subspace. Since $c\ge1$, we have
  \begin{equation*}
    \rho_n^\psi+\eta I
    \le c^n\sigma_n^\psi+\eta I
    \le c^n(\sigma_n^\psi+\eta I).
  \end{equation*}
  Operator monotonicity of the logarithm and taking the trace against
  $\rho_n^\psi$ give
  \begin{equation*}
    \Tr\rho_n^\psi\bigl[\log(\rho_n^\psi+\eta I)
    -\log(\sigma_n^\psi+\eta I)\bigr]\le n\log c.
  \end{equation*}
  Letting $\eta\downarrow0$ yields
  $D(\rho_n^\psi\Vert\sigma_n^\psi)\le n\log c$ for every input
  $\ket\psi$. Taking the supremum over inputs gives
  $a_n\le n\log c=nL$.

  As noted in \eqref{eq:d-setup}, superadditivity implies $d=\sup_n a_n/n$;
  nonnegativity now gives all stated bounds.
  Conversely, a normalized maximally entangled input produces $J_\N/\dim A$ and
  $J_\M/\dim A$.
  If support inclusion fails, these states have infinite relative entropy, as do
  their tensor powers.
  Hence $a_n=+\infty$ for every $n$, excluding $d<\infty$.
\end{proof}

To lift Choi domination to fixed dilations, we use the following
finite-dimensional form of Douglas factorization~\cite{Douglas}.

\begin{lemma}[Matrix factorization]\label{lem:factor}
Let $F:\mathsf E\to\mathsf H$ and $G:\mathsf F\to\mathsf H$ be
linear maps. If
$FF^\dagger\le tGG^\dagger$ for $t\ge0$, then $F=GD$ for a linear
map $D:\mathsf E\to\mathsf F$ with $\op D\le\sqrt t$.
\end{lemma}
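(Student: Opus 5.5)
The plan is to build $D$ explicitly from the Moore--Penrose inverse of $G$. The operator inequality first gives a range inclusion, and then the same inequality gives the norm bound.

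\emph{Range inclusion.} If $G^\dagger x=0$, then
\begin{equation*}
 \|F^\dagger x\|^2=\langle x,FF^\dagger x\rangle\le t\|G^\dagger x\|^2=0 .
\end{equation*}
Hence $\ker G^\dagger\subseteq\ker F^\dagger$. Taking orthogonal complements gives $\ran F\subseteq\ran G$. (If $t=0$, then $F=0$ and $D=0$ works, so we may assume $t>0$.)

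\emph{Definition of $D$.} Set $D=G^+F:\mathsf E\to\mathsf F$. Since $GG^+=\Pi_{\ran G}$ and $\ran F\subseteq\ran G$, we get $GD=\Pi_{\ran G}F=F$.

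\emph{Norm bound.} The map $D$ takes values in $\ran G^+=\ran G^\dagger=(\ker G)^\perp$, on which $G$ is injective. This is the key point: for $y\in(\ker G)^\perp$, $\|Gy\|$ controls $\|y\|$ in the direction we need. I would avoid smallest singular values, because those give the wrong constant, and argue by duality instead.

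For $e\in\mathsf E$, write $y=De\in\ran G^\dagger$ and choose $x$ with $y=G^\dagger x$. Then
\begin{equation*}
 \|y\|^2=\langle G^\dagger x,De\rangle=\langle x,GDe\rangle=\langle x,Fe\rangle=\langle F^\dagger x,e\rangle .
\end{equation*}
By Cauchy--Schwarz and the hypothesis,
\begin{equation*}
 \|y\|^2\le\|F^\dagger x\|\,\|e\|\le\sqrt t\,\|G^\dagger x\|\,\|e\|=\sqrt t\,\|y\|\,\|e\| .
\end{equation*}
Dividing by $\|y\|$ when $y\neq0$ gives $\|De\|\le\sqrt t\,\|e\|$, that is, $\op D\le\sqrt t$.

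The only delicate step is this duality bound. The trick that makes it work is writing $De$ as $G^\dagger x$, which is possible precisely because $D$ maps into $\ran G^\dagger$. Everything else is routine linear algebra.
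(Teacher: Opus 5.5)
Your proposal is correct and follows essentially the paper's approach: the same kernel inclusion $\ker G^\dagger\subseteq\ker F^\dagger$ and the same choice $D=G^+F$. The only difference is the norm step. The paper obtains it in one line by congruence of the hypothesis with $G^+$, giving $DD^\dagger\le t\,G^+GG^\dagger(G^+)^\dagger=t\,\Pi_{\ran G^\dagger}\le tI$, whereas your vectorwise Cauchy--Schwarz argument (writing $De=G^\dagger x$) proves the same bound equally validly.
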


\begin{proof}
The order assumption implies $\ker G^\dagger\subseteq\ker F^\dagger$,
hence $\ran F\subseteq\ran G$. Set $D=G^+F$, so $GD=F$.
Congruence of the assumed inequality by $G^+$ yields
$DD^\dagger\le tG^+GG^\dagger(G^+)^\dagger
=t\Pi_{\ran G^\dagger}\le tI_{\mathsf F}$.
Taking the operator norm proves the claim, including $t=0$.
\end{proof}

Fix Stinespring isometries $V_N:A\to B\otimes\mathsf E_N$ and
$V_M:A\to B\otimes\mathsf E_M$ for the channel pair. Choose
orthonormal environment bases, write
$V_N\ket x=\sum_{j=1}^u N_j\ket x\otimes\ket j$ with $u=\dim\mathsf E_N$,
and similarly use $v=\dim\mathsf E_M$ Kraus operators for $V_M$.
For $K:A\to B$, define
$\vecop(K)=\sum_i\ket i\otimes K\ket i$ and let
$F_N=[\vecop(N_1)\ \cdots\ \vecop(N_u)]$, with $F_M$ defined in
the same way. Then $F_NF_N^\dagger=J_\N$ and
$F_MF_M^\dagger=J_\M$.

If $F_0=F_MD$, its corresponding dilation operator is
$(I_B\otimes D^{\mathsf T})V_M$: the transpose is required by this
column convention. Transposition preserves the operator norm even for
rectangular matrices. More generally, reshaping
$V\ket x=\sum_j K_j\ket x\otimes\ket j$ into the column matrix
$F=[\vecop(K_1)\ \cdots]$ gives
\begin{equation}
 \Tr_BFF^\dagger=(V^\dagger V)^{\mathsf T},\qquad
 \op V^2=\op{\Tr_BFF^\dagger}.
 \label{eq:reshape-norm}
\end{equation}
Indeed, the $(i,j)$ entry of the partial trace is
$\sum_a\langle K_a j,K_a i\rangle$, which is the $(j,i)$ entry of
$\sum_aK_a^\dagger K_a$. The identity does not require $V$ to be an
isometry.

The factorization lemma now provides an exact comparison in the
prescribed environments. This representation will also control the
residuals of approximate comparisons.

\begin{corollary}[Exact auxiliary map]\label{cor:exact}
Fix Stinespring isometries $V_N:A\to B\otimes\mathsf E_N$ and
$V_M:A\to B\otimes\mathsf E_M$ for channels $\N,\M$.
If $\N\cpo c\M$ for $c\ge1$, there is an auxiliary linear map
$C_*:\mathsf E_M\to\mathsf E_N$ with $\op{C_*}\le\sqrt c$
and $V_N=(I_B\otimes C_*)V_M$.
\end{corollary}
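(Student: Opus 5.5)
The plan is to pass from channel domination to Choi operators, and then to the column matrices $F_N,F_M$. Those are factored with \cref{lem:factor} and reshaped back into a dilation operator. First, by the Choi criterion, $\N\cpo c\M$ means $J_{c\M-\N}\ge0$, that is, $J_\N\le cJ_\M$. With the notation just introduced, $J_\N=F_NF_N^\dagger$ and $J_\M=F_MF_M^\dagger$, so
\begin{equation*}
 F_NF_N^\dagger\le c\,F_MF_M^\dagger .
\end{equation*}
Applying \cref{lem:factor} with $F=F_N$, $G=F_M$, $t=c$ gives a linear map $D:\mathbb C^u\to\mathbb C^v$ with $F_N=F_MD$ and $\op D\le\sqrt c$. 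Here we identify $\mathbb C^u\simeq\mathsf E_N$ and $\mathbb C^v\simeq\mathsf E_M$ through the chosen orthonormal environment bases.

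Next, translate $F_N=F_MD$ back to Kraus operators. Comparing the $j$-th columns gives $\vecop(N_j)=\sum_k D_{kj}\vecop(M_k)$. Since $\vecop$ is linear and injective, this means $N_j=\sum_k D_{kj}M_k$ for each $j$.

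Now define $C_*=D^{\mathsf T}:\mathsf E_M\to\mathsf E_N$, so that $C_*\ket k=\sum_j D_{kj}\ket j$. For every $\ket x\in A$,
\begin{equation*}
 (I_B\otimes C_*)V_M\ket x=\sum_k M_k\ket x\otimes\sum_j D_{kj}\ket j
 =\sum_j\Bigl(\sum_k D_{kj}M_k\Bigr)\ket x\otimes\ket j=V_N\ket x .
\end{equation*}
This is exactly the remark that $F_0=F_MD$ corresponds to the dilation operator $(I_B\otimes D^{\mathsf T})V_M$. Transposition preserves the operator norm, so $\op{C_*}=\op D\le\sqrt c$.

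The only real care needed is the index convention. One must check that the transpose, rather than the adjoint or $D$ itself, is the map that sends the column relation to the environment action, and that the basis identifications are used consistently. Everything else follows directly from \cref{lem:factor}.
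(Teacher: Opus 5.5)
Your proposal is correct and follows the paper's proof: you write the domination as $F_NF_N^\dagger\le cF_MF_M^\dagger$, apply \cref{lem:factor} to obtain $F_N=F_MD$ with $\op D\le\sqrt c$, and take $C_*=D^{\mathsf T}$. Your explicit Kraus-level check $N_j=\sum_k D_{kj}M_k$ simply spells out the column convention that the paper cites from \cref{sec:algebra}.
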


\begin{proof}
Apply \cref{lem:factor} to
$F_NF_N^\dagger\le cF_MF_M^\dagger$. It gives $F_N=F_MD$ with
$D:\mathsf E_N\to\mathsf E_M$ and $\op D\le\sqrt c$.
The column convention above gives the claimed identity with
$C_*=D^{\mathsf T}:\mathsf E_M\to\mathsf E_N$.
\end{proof}

\subsection{The parallel converse and achievability}\label{sec:parallel-proof}

An operator-norm approximation controls every input and test through
the following acceptance-amplitude bound.

\begin{lemma}[Acceptance amplitudes]\label{lem:amplitude}
Fix Stinespring isometries $V_N:A\to B\otimes\mathsf E_N$ and
$V_M:A\to B\otimes\mathsf E_M$ for $\N,\M$.
For $C:\mathsf E_M\to\mathsf E_N$, set
$e=\op{V_N-(I_B\otimes C)V_M}$. For any reference-assisted input
and binary test, let $p,q$ be its acceptance probabilities under
$\N,\M$, respectively. Then
\begin{equation*}
 \sqrt p\le\op C\sqrt q+e.
\end{equation*}
\end{lemma}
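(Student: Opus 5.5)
The plan is to express the acceptance probability as the squared norm of a projected output vector and then compare the two dilations at the level of vectors. Fix a pure input $\ket\psi_{RA}$; a mixed input can be purified into an enlarged reference, and the purifying system is then simply ignored by the test. Let $0\le T\le I$ act on $R\otimes B$. Set $\ket{\phi_N}=(I_R\otimes V_N)\ket\psi\in R\otimes B\otimes\mathsf E_N$, and define $\ket{\phi_M}$ in the same way. Because the environment is traced out, we have
\begin{equation*}
 p=\Tr\bigl[(T\otimes I_{\mathsf E_N})\ket{\phi_N}\bra{\phi_N}\bigr]
 =\bigl\|(T^{1/2}\otimes I_{\mathsf E_N})\ket{\phi_N}\bigr\|^2,
\end{equation*}
and similarly $q=\|(T^{1/2}\otimes I_{\mathsf E_M})\ket{\phi_M}\|^2$. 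Working with $T^{1/2}$ instead of a projection avoids the need for any Naimark dilation of the test.

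Next, let $\ket{\widehat\phi}=(I_R\otimes I_B\otimes C)\ket{\phi_M}=(I_R\otimes(I_B\otimes C)V_M)\ket\psi$. Tensoring with $I_R$ preserves the operator norm, and $\ket\psi$ is a unit vector, so $\|\ket{\phi_N}-\ket{\widehat\phi}\|\le e$. The operator $T^{1/2}\otimes I$ is a contraction. The triangle inequality therefore gives
\begin{equation*}
 \sqrt p\le\bigl\|(T^{1/2}\otimes I_{\mathsf E_N})\ket{\widehat\phi}\bigr\|+e.
\end{equation*}

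The key step is that $T^{1/2}$ acts on $R\otimes B$ while $C$ acts only on the environment, so the two commute:
\begin{equation*}
 (T^{1/2}\otimes I_{\mathsf E_N})(I_{RB}\otimes C)=(I_{RB}\otimes C)(T^{1/2}\otimes I_{\mathsf E_M}).
\end{equation*}
Consequently
\begin{equation*}
 \bigl\|(T^{1/2}\otimes I)\ket{\widehat\phi}\bigr\|\le\op{I_{RB}\otimes C}\,\bigl\|(T^{1/2}\otimes I)\ket{\phi_M}\bigr\|=\op C\sqrt q .
\end{equation*}
Combining this with the previous inequality proves the claim. No step here is a real obstacle; the only points needing care are the purification reduction and the fact that $\op{I\otimes C}=\op C$.
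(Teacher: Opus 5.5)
Your proposal is correct and matches the paper's proof step for step. Both purify the input, write $\sqrt p=\|(\sqrt T\otimes I_{\mathsf E_N})x\|$, and use the $I_R$-tensored error bound with the contraction $\sqrt T$. Both then commute $\sqrt T$ past $I_{RB}\otimes C$ to get $\op C\sqrt q$.
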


\begin{proof}
After purifying the input and enlarging $R$, it suffices to consider
a unit vector $\ket\psi_{RA}$ and an effect $0\le T\le I_{RB}$.
Set
$x=(I_R\otimes V_N)\ket\psi$ and
$y=(I_R\otimes V_M)\ket\psi$. Tensoring the error with $I_R$
gives $\|x-(I_{RB}\otimes C)y\|\le e$. As $\sqrt T$ is a
contraction and acts on systems disjoint from $C$,
\begin{align*}
 \sqrt p
 &=\|(\sqrt T\otimes I_{\mathsf E_N})x\|\\
 &\le\|(I_{RB}\otimes C)(\sqrt T\otimes I_{\mathsf E_M})y\|+e
 \le\op C\sqrt q+e.
\end{align*}
The argument is independent of the size of the reference.
\end{proof}

Applying this lemma to tensor-power channels gives the parallel
converse. The matching lower bound requires only a fixed input block
and the state Stein lemma.

\begin{proposition}[Parallel Stein lemma and exponential converse]
\label{prop:parallel}
Let $\N,\M:A\to B$ be channels with $d=D^\infty(\N\Vert\M)<\infty$.
For every $0<\varepsilon<1$,
$\lim_{n\to\infty}-n^{-1}\log\beta_n^*(\varepsilon)=d$.
For every $R>d$, there are $K_R,\gamma_R>0$ such that all parallel
$n$-use tests satisfy
\begin{equation}
 q\le2^{-nR}\quad\Longrightarrow\quad p\le K_R e^{-\gamma_R n}
 \label{eq:main-power}
\end{equation}
for all sufficiently large $n$.
\end{proposition}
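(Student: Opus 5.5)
The plan is to derive the converse directly from \cref{thm:approximation} via \cref{lem:amplitude}, and to obtain achievability from a fixed input block together with the state Stein lemma. The limit then follows by combining the two bounds.

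\emph{Exponential converse.} Fix $R>d$ and choose $S=(d+R)/2\in(d,R)$. By \cref{thm:approximation} there are $K,\gamma>0$ and, for all sufficiently large $n$, maps $C_n$ with $\op{C_n}\le2^{nS/2}$ and $e_n\le Ke^{-\gamma n}$. Apply \cref{lem:amplitude} to the tensor-power channels $\N^{\otimes n},\M^{\otimes n}$, using the dilations $V_N^{\otimes n},V_M^{\otimes n}$ and the auxiliary map $C_n$. For any parallel input and test with $q\le2^{-nR}$ this gives
\begin{equation*}
\sqrt p\le 2^{nS/2}2^{-nR/2}+Ke^{-\gamma n}=2^{-n(R-S)/2}+Ke^{-\gamma n}.
\end{equation*}
Squaring and using $(a+b)^2\le2a^2+2b^2$ yields $p\le 2\cdot2^{-n(R-S)}+2K^2e^{-2\gamma n}$. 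This is at most $K_Re^{-\gamma_Rn}$ with $\gamma_R=\min\{(R-S)\ln2,\,2\gamma\}$ and $K_R=2+2K^2$, which proves \eqref{eq:main-power}.

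\emph{Upper bound on the exponent.} Fix $\varepsilon\in(0,1)$ and $R>d$. If $\beta_n^*(\varepsilon)\le2^{-nR}$ held for infinitely many $n$, the optimal test would have $p\ge1-\varepsilon>0$ while $p\le K_Re^{-\gamma_Rn}\to0$. These are incompatible for large $n$. Hence $\limsup_n-n^{-1}\log\beta_n^*(\varepsilon)\le R$ for every $R>d$, and so the $\limsup$ is at most $d$.

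\emph{Achievability.} For the lower bound, fix $k$ and an input $\ket{\psi}\in R_k\otimes A^k$ with $D(\rho_k^\psi\Vert\sigma_k^\psi)\ge a_k-\eta$, where $a_k$ is finite by \cref{lem:finite}. Write $n=mk+s$ with $0\le s<k$ and use the input $\ket\psi^{\otimes m}$ on the first $mk$ uses, ignoring the remaining $s$ uses. Any product input on those uses works, and the test may simply act trivially there. The quantum Stein lemma of Hiai--Petz for the pair $\rho_k^\psi,\sigma_k^\psi$ gives tests with type-I error tending to $0$, hence eventually at most $\varepsilon$, and with type-II exponent $D(\rho_k^\psi\Vert\sigma_k^\psi)$ per block. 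Since $s<k$ is bounded, $\liminf_n-n^{-1}\log\beta_n^*(\varepsilon)\ge(a_k-\eta)/k$. Letting $\eta\to0$ and then taking the supremum over $k$ via \eqref{eq:d-setup} gives a $\liminf$ of at least $d$, so the limit exists and equals $d$.

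\emph{Main obstacle.} The only real subtlety is already isolated in \cref{thm:approximation}: the approximating map must be uniform over inputs and references, and \cref{lem:amplitude} needs precisely this uniformity. Beyond that, the remaining care goes into the extended-real bookkeeping and the restriction to ``sufficiently large $n$''; the block-remainder handling in the achievability step is routine.
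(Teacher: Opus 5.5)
Your proposal is correct and follows essentially the same route as the paper. The converse combines the uniform Stinespring approximation at an intermediate rate $S\in(d,R)$ with the acceptance-amplitude bound, and achievability repeats a fixed $k$-use input block, pads the remainder, applies the direct part of the state Stein lemma, and takes the supremum over inputs and $k$; your choice $S=(d+R)/2$ and the constants from $(a+b)^2\le 2a^2+2b^2$, rather than the paper's $(1+K)^2$ bound, are cosmetic differences.
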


\begin{proof}
Fix $d<S<R$, and take $C_n,e_n$ from
\cref{thm:approximation}. \cref{lem:amplitude} applies
to the fixed tensor-power dilations. When $q\le2^{-nR}$, it gives
\begin{equation*}
 p\le\left(2^{-n(R-S)/2}+Ke^{-\gamma n}\right)^2
 \le(1+K)^2e^{-2\min\{(R-S)\ln2/2,\gamma\}n}.
\end{equation*}
This proves \eqref{eq:main-power}. For fixed $\varepsilon$, its
right side is eventually smaller than $1-\varepsilon$, so every
feasible test in \eqref{eq:beta-n} has $q>2^{-nR}$.
Thus $\beta_n^*(\varepsilon)\ge2^{-nR}$ eventually. Since $R>d$
is arbitrary, the limit superior of the normalized negative logarithm
is at most $d$.

For the lower bound, fix a blocklength $k$ and a pure input for $k$
uses, with output states $\rho,\sigma$. Finite domination ensures
support inclusion. The state Stein lemma gives
\begin{equation*}
 \lim_{m\to\infty}\frac1m
 D_H^\varepsilon(\rho^{\otimes m}\Vert\sigma^{\otimes m})
 =D(\rho\Vert\sigma).
\end{equation*}
Its direct part also attains each strictly smaller rate with null
acceptance tending to one~\cite{HiaiPetz,OgawaNagaoka}.
Write $n=mk+\ell$ with $0\le\ell<k$. Repeat the input block $m$
times, use any fixed state on the remaining inputs, and ignore the
remaining outputs. This is an admissible parallel strategy, so
\begin{equation}
 -\log\beta_n^*(\varepsilon)
 \ge D_H^\varepsilon(\rho^{\otimes m}\Vert\sigma^{\otimes m}).
 \label{eq:block-direct}
\end{equation}
Since $m/n\to1/k$, the limit inferior is at least
$D(\rho\Vert\sigma)/k$. Taking the supremum over the input and then
over $k$ gives $\sup_k a_k/k=d$. This proves the limit and also
justifies the below-threshold achievability assertion following
\cref{thm:adaptive}.
\end{proof}

If Choi support inclusion fails, a zero-probability event under the
alternative gives a direct test.

\begin{proposition}[Failure of Choi support inclusion]\label{prop:infinite}
Let $\N,\M:A\to B$ be channels with
$\supp J_\N\not\subseteq\supp J_\M$. There is
$\lambda\in(0,1]$ such that, for every $n\ge1$, a parallel
$n$-use test has $q_n=0$ and $p_n=1-(1-\lambda)^n$.
\end{proposition}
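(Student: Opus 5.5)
The plan is to use a maximally entangled input on each channel use and the projection onto the orthogonal complement of $\supp J_\M$ as the test. Let $\Omega$ denote the normalized maximally entangled state on $R\otimes A$ with $R\simeq A$. The single-use outputs are then $\rho=J_\N/\dim A$ and $\sigma=J_\M/\dim A$. Let $P=I-\Pi_{\supp J_\M}$ be the projection onto $(\supp J_\M)^\perp$. Since $\supp J_\N\not\subseteq\supp J_\M$, we have $P J_\N P\neq0$, and so
\[
\lambda:=\Tr P\rho=\Tr(PJ_\N)/\dim A>0,
\]
while $\Tr P\sigma=0$ exactly. Trivially $\lambda\le1$.

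For $n$ uses, feed $\Omega^{\otimes n}$ into the $n$ parallel channel uses; this is an admissible parallel input with reference $R^{\otimes n}\simeq A^n$. The outputs are $\rho^{\otimes n}$ and $\sigma^{\otimes n}$. Take the effect
\[
T_n=I-(I-P)^{\otimes n},
\]
which accepts the null hypothesis when at least one block lands in $(\supp J_\M)^\perp$. Clearly $0\le T_n\le I$.

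Under $\M$, each factor satisfies $\Tr(I-P)\sigma=1$, so
\[
q_n=1-\prod_{i=1}^n\Tr\bigl((I-P)\sigma\bigr)=1-1=0.
\]
Equivalently, $\sigma^{\otimes n}$ is supported in $(\supp J_\M)^{\otimes n}$, which $T_n$ annihilates.

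Under $\N$, the product structure gives
\[
p_n=1-\bigl(\Tr(I-P)\rho\bigr)^n=1-(1-\lambda)^n.
\]

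The only step needing a sentence of justification is $\lambda>0$. If $\Tr PJ_\N=0$, then positivity of $J_\N$ gives $PJ_\N P=0$, hence $J_\N^{1/2}P=0$ and $\supp J_\N\subseteq\supp J_\M$, which contradicts the hypothesis. Beyond this there is no real obstacle; the proposition is an explicit construction. It also confirms the remark that $\beta_n^*(\varepsilon)=0$ once $1-(1-\lambda)^n\ge1-\varepsilon$, so the exponent is $+\infty=d$, consistent with \cref{lem:finite}.
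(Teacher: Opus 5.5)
Your proposal is correct and follows the paper's proof: the same maximally entangled input repeated $n$ times, the projection onto $\ker\sigma=(\supp J_\M)^\perp$, and acceptance when at least one of the $n$ outcomes lands there, so that $q_n=0$ and $p_n=1-(1-\lambda)^n$ follow from the product structure. Your extra sentence on why $\lambda>0$ supplies a step the paper leaves implicit; the reasoning there is sound.
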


Consequently, $\beta_n^*(\varepsilon)=0$ for all sufficiently large
$n$ at every fixed $0<\varepsilon<1$.

\begin{proof}
A normalized maximally entangled input gives the states
$\rho=J_\N/\dim A$ and $\sigma=J_\M/\dim A$.
Let $Q=\Pi_{\ker\sigma}$. Support failure implies
$\lambda=\Tr Q\rho>0$, while $\Tr Q\sigma=0$.
Repeat the input independently, measure $\{Q,I-Q\}$ on each output
and reference, and accept the null if at least one outcome is $Q$.
The stated probabilities follow by independence. Its type-I error
$(1-\lambda)^n$ eventually lies below every fixed positive tolerance.
\end{proof}

\subsection{R\'enyi continuity}\label{sec:renyi}

To handle adaptive strategies, we need a divergence with a chain rule for
different input states under the two hypotheses.
We use the sandwiched R\'enyi divergence, introduced
in~\cite{MullerLennert,WWY}.
For $\alpha>1$, a state $\sigma$, and a positive operator $\tau$ supported on
$\supp\sigma$, define
\begin{equation}
  \widetilde Q_\alpha(\tau\Vert\sigma) =
  \Tr\left(\sigma^{-\frac{\alpha-1}{2\alpha}}
    \tau\sigma^{-\frac{\alpha-1}{2\alpha}}\right)^\alpha.
 \label{eq:sandwiched-q}
\end{equation}
Here inverse powers are taken on $\supp\sigma$, and the trace is
evaluated on that subspace.
For a state $\rho$, set
$\widetilde D_\alpha(\rho\Vert\sigma) = (\alpha-1)^{-1}\log\widetilde Q_\alpha(\rho\Vert\sigma)$,
with value $+\infty$ without support inclusion.
We use data processing~\cite{Beigi,FrankLieb} and
$\widetilde D_\alpha\ge D$~\cite{MullerLennert}.
For a rectangular matrix $F$, its Schatten norm is
$\|F\|_p=(\Tr(F^\dagger F)^{p/2})^{1/p}$, for $p\ge1$.

Define $\widetilde D_\alpha(\N\Vert\M)$ by the same reference-assisted
optimization as \eqref{eq:channel-d}, with $D$ replaced by
$\widetilde D_\alpha$, and let
\begin{equation*}
 d_\alpha:=\widetilde D_\alpha^\infty(\N\Vert\M)
 =\lim_{n\to\infty}\frac1n
       \widetilde D_\alpha(\N^{\otimes n}\Vert\M^{\otimes n}).
\end{equation*}
For $d<\infty$ and $1<\alpha\le2$, this limit equals the supremum
of the normalized block values and lies in $[d,L]$, where $L$ is the
domination rate of \cref{lem:finite}. Indeed, superadditivity
follows from product inputs, and the upper bound follows from the
small-trace estimate below applied to $\rho_n^\psi\le2^{nL}\sigma_n^\psi$.

The adaptive chain rule gives the threshold $d$ once $d_\alpha$
converges to $d$ as $\alpha\downarrow1$. The next theorem establishes
this convergence.

\begin{theorem}[Right continuity of the regularized divergence]
\label{thm:renyi-continuity}
For channels $\N,\M:A\to B$ with $d=D^\infty(\N\Vert\M)<\infty$,
\begin{equation}
 \lim_{\alpha\downarrow1}\widetilde D_\alpha^\infty(\N\Vert\M)=d.
 \label{eq:renyi-continuity}
\end{equation}
\end{theorem}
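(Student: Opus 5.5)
Since $\widetilde D_\alpha\ge D$ gives $d_\alpha\ge d$ for every $\alpha>1$, only the upper bound $\limsup_{\alpha\downarrow1}d_\alpha\le d$ needs proof. It also suffices to bound $d_\alpha$ along a single blocklength sequence, because $d_\alpha=\sup_n n^{-1}\widetilde D_\alpha(\N^{\otimes n}\Vert\M^{\otimes n})$ equals the limit. Fix $S>d$ and take $C_n$, $K$, $\gamma$ from \cref{thm:approximation}. Set $\widehat V_n=(I\otimes C_n)V_M^{\otimes n}$ and $H_n=V_N^{\otimes n}-\widehat V_n$. As stressed after \cref{thm:approximation}, $H_n=(I\otimes(C_*^{\otimes n}-C_n))V_M^{\otimes n}$ by \cref{cor:exact}. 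Hence $H_n$ has an exact auxiliary map of norm at most $2^{nL/2}+2^{nS/2}\le 2\cdot2^{nL/2}$, and $\op{H_n}\le Ke^{-\gamma n}$.

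The key step is a Schatten-norm expression for $\widetilde Q_\alpha$ of a purified output. For a unit input $\ket\psi_{RA^n}$, set $x=(I_R\otimes V_N^{\otimes n})\ket\psi$ and $y=(I_R\otimes V_M^{\otimes n})\ket\psi$, so that $\rho=\Tr_{E}xx^\dagger$ and $\sigma=\Tr_Eyy^\dagger$. I would write $x=(I\otimes C)y$ with $C=C_*^{\otimes n}$, and view $y$ as an operator $Y:E_M\to RB^n$ (with $\sigma=YY^\dagger$) and $x$ as $Y C^{\mathsf T}$ up to the transpose convention of \eqref{eq:reshape-norm}. Then
\[
\widetilde Q_\alpha(\rho\Vert\sigma)=\bigl\|\sigma^{-\frac{\alpha-1}{2\alpha}}YC^{\mathsf T}\bigr\|_{2\alpha}^{2\alpha}.
\]
Writing $Y=\sigma^{1/2}U$ with $U$ a partial isometry gives $\sigma^{-\frac{\alpha-1}{2\alpha}}Y=\sigma^{\frac1{2\alpha}}U$, so $\|\sigma^{-\frac{\alpha-1}{2\alpha}}Y\|_{2\alpha}^{2\alpha}=\Tr\sigma=1$. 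Decompose $x=(I\otimes C_n)y+h$ with $h=(I\otimes (C-C_n))y$, and use the triangle inequality in $\|\cdot\|_{2\alpha}$.

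The main term is bounded by $\op{C_n}\cdot1\le2^{nS/2}$. For the residual term $Z=\sigma^{-\frac{\alpha-1}{2\alpha}}Y(C-C_n)^{\mathsf T}$, I would interpolate between two estimates. The Schatten-$2$ norm satisfies $\|Z\|_2^2=\Tr\sigma^{-\frac{\alpha-1}{\alpha}}\Tr_Ehh^\dagger$. The operator norm is bounded by $\op{\sigma^{-\frac{\alpha-1}{2\alpha}}\sigma^{1/2}}\cdot2^{nL/2+1}\le2^{nL/2+1}$. The factor $\sigma^{-\frac{\alpha-1}{\alpha}}$ is dangerous, since $\sigma$ can have tiny eigenvalues. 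The cleaner route is Hölder in the form
\[
\|Z\|_{2\alpha}\le \|Z\|_\infty^{1-1/\alpha}\,\|Z\|_2^{1/\alpha}.
\]
Expanding $\|Z\|_2^2$ as $\Tr\, U^\dagger\sigma^{1/\alpha}U\,(C-C_n)^{\mathsf T}\overline{(C-C_n)}$ and applying Hölder again bounds it by $\|\sigma^{1/\alpha}\|_{\alpha}\cdot\|(C-C_n)\text{-weighted }y\|$. This is at most a power of $\|h\|^2\le K^2e^{-2\gamma n}$ times $2^{nL(1-1/\alpha)}$. The upshot I would aim for is
\[
\|Z\|_{2\alpha}\le 2^{nL\theta(\alpha)}\,(Ke^{-\gamma n})^{\kappa(\alpha)},
\]
with $\theta(\alpha)\to0$ and $\kappa(\alpha)\to\kappa_0>0$ as $\alpha\downarrow1$. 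Making this weighted estimate rigorous and uniform in $\psi$ is the main obstacle. If direct interpolation fails, I would instead use joint convexity/Lieb-type bounds on the map $\tau\mapsto\widetilde Q_\alpha(\tau\Vert\sigma)$ together with the domination $\Tr_Ehh^\dagger\le c'2^{nL}\sigma$, which follows from the exact map.

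Granting this bound, $\widetilde Q_\alpha^{1/(2\alpha)}\le2^{nS/2}+2^{nL\theta(\alpha)}K^{\kappa}e^{-\gamma\kappa n}$. For $\alpha$ close enough to $1$, $L\theta(\alpha)\ln2<\gamma\kappa$, so the second term is $o(1)$ and $\widetilde Q_\alpha\le(2^{nS/2}+1)^{2\alpha}$ uniformly in $\psi$. Therefore
\[
\frac1n\widetilde D_\alpha(\N^{\otimes n}\Vert\M^{\otimes n})\le\frac{\alpha}{\alpha-1}\cdot\frac{2}{n}\log(2^{nS/2}+1).
\]
This tends to $\frac{\alpha}{\alpha-1}S$ as $n\to\infty$, which is the wrong normalization. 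So the $2\alpha$-power must be tracked more carefully: $\widetilde Q_\alpha$ scales as $\op{C_n}^{2\alpha}\Tr\sigma$ only once the normalization of $\sigma^{-(\alpha-1)/(2\alpha)}$ is accounted for. Redoing the main term, $\|\sigma^{-\frac{\alpha-1}{2\alpha}}Y C_n^{\mathsf T}\|_{2\alpha}^{2\alpha}$ should be bounded by $\op{C_n}^{2(\alpha-1)}\Tr\rho$-like quantities, giving $2^{nS(\alpha-1)}$. Getting exactly this exponent, so that $d_\alpha\le S+o(1)$, is the crux. Once it holds, letting $n\to\infty$, then $\alpha\downarrow1$, then $S\downarrow d$ proves \eqref{eq:renyi-continuity}.
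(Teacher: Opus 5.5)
\textbf{There is a genuine gap: the single analytic estimate that makes both terms work is missing.} Your skeleton is the paper's. You take the lower bound from $\widetilde D_\alpha\ge D$. You use the identity $\widetilde Q_\alpha(\rho\Vert\sigma)^{1/(2\alpha)}=\|\sigma^{-(\alpha-1)/(2\alpha)}F\|_{2\alpha}$ for a purification matrix $F$. You split $F=F_0+F_1$ with $F_0=YC_n^{\mathsf T}$ and exact residual $F_1=Y(C_*^{\otimes n}-C_n)^{\mathsf T}$, and you apply the Schatten triangle inequality. You flag the missing estimate yourself (``the crux'', ``the main obstacle''), and the substitutes you try do not work.

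\textbf{Main term.} The bound $\op{C_n}\,\|\sigma^{-(\alpha-1)/(2\alpha)}Y\|_{2\alpha}\le2^{nS/2}$ uses the domination at full strength and never uses $\|F_0\|_2\le1+e_n$. It therefore only gives $d_\alpha\le\frac{\alpha}{\alpha-1}S$, which blows up as $\alpha\downarrow1$. What is needed is the paper's small-trace lemma: if $0\le\tau\le t\sigma$ and $1<\alpha\le2$, then $\widetilde Q_\alpha(\tau\Vert\sigma)\le t^{\alpha-1}\Tr\tau$. The proof has three steps:
\begin{enumerate}
\item Put $B=\sigma^{-(\alpha-1)/(2\alpha)}\tau\sigma^{-(\alpha-1)/(2\alpha)}$; domination gives $B\le t\sigma^{1/\alpha}$.
\item Operator monotonicity of $x\mapsto x^{\alpha-1}$ gives $B^{\alpha-1}\le t^{\alpha-1}\sigma^{(\alpha-1)/\alpha}$.
\item Hence $\Tr B^\alpha=\Tr B^{1/2}B^{\alpha-1}B^{1/2}\le t^{\alpha-1}\Tr B\sigma^{(\alpha-1)/\alpha}=t^{\alpha-1}\Tr\tau$.
\end{enumerate}
In Schatten form this reads $\|\sigma^{-(\alpha-1)/(2\alpha)}F_j\|_{2\alpha}\le t_j^{(\alpha-1)/(2\alpha)}\|F_j\|_2^{1/\alpha}$ whenever $F_jF_j^\dagger\le t_j\sigma$. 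With $t_0=2^{nS}$ and $\|F_0\|_2\le1+e_n$, the main term becomes $(1+e_n)^{1/\alpha}2^{nS(\alpha-1)/(2\alpha)}$, which is exactly rate $S$ after multiplying the logarithm by $2\alpha/(\alpha-1)$.

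\textbf{Residual term.} Your H\"older interpolation $\|Z\|_{2\alpha}\le\|Z\|_\infty^{1-1/\alpha}\|Z\|_2^{1/\alpha}$ keeps the weight $\sigma^{-(\alpha-1)/(2\alpha)}$ in both endpoints. The endpoint $\|Z\|_2^2=\Tr\sigma^{-(\alpha-1)/\alpha}\Tr_E hh^\dagger$ is not bounded by a power of $e_n^2$ independently of the small eigenvalues of $\sigma$. The H\"older step you sketch extracts $\|\sigma^{1/\alpha}\|_\alpha=1$. That discards the coupling between $\sigma$ and $h$ that carries the smallness, so no factor of $e_n$ survives.

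The correct interpolation moves the weight instead. It runs between the domination $\op{\sigma^{-1/2}F_1}^2\le4\cdot2^{nL}$, supplied by your exact map, and the unweighted bound $\|F_1\|_2\le e_n$. This is the same lemma again, and it yields $e_n^{1/\alpha}(4\cdot2^{nL})^{(\alpha-1)/(2\alpha)}$. That is precisely your hoped-for form, with $\theta=(\alpha-1)/(2\alpha)$ and $\kappa=1/\alpha$.

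\textbf{Case $d=L$.} Your bound $\op{C_*^{\otimes n}-C_n}\le2\cdot2^{nL/2}$ assumes $S\le L$, so this case must be handled separately. The same lemma with $\tau=\rho\le2^{nL}\sigma$ gives $d\le d_\alpha\le L$, which settles it.
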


In fact, let $L$ be the domination rate from \cref{lem:finite}.
For $d<S<L$, take an approximation with error at most $Ke^{-\gamma n}$
from \cref{thm:approximation}. We will prove that, for $1<\alpha\le2$,
\begin{equation}
 \widetilde D_\alpha^\infty(\N\Vert\M)
 \le\max\left\{S,\ L-\frac{2\gamma}{(\alpha-1)\ln2}\right\}.
 \label{eq:renyi-quantitative}
\end{equation}
To prove this bound, we must control the residual after weighting by a
negative power of $\sigma$. Small trace alone is insufficient; the
following lemma also uses domination by $\sigma$, without requiring
a lower bound on its nonzero eigenvalues.

\begin{lemma}[Small-trace dominated terms]\label{lem:renyi-moment}
Let $\sigma\in\mathcal S(\mathsf H)$, $t\ge0$, and $1<\alpha\le2$.
Every positive operator $\tau\le t\sigma$ satisfies
$\widetilde Q_\alpha(\tau\Vert\sigma)\le t^{\alpha-1}\Tr\tau$.
Consequently, if $F_0,F_1:\mathsf E\to\mathsf H$ are linear maps,
$F=F_0+F_1$, and $F_jF_j^\dagger\le t_j\sigma$ with $t_j\ge0$,
then
\begin{equation}
 \widetilde Q_\alpha(FF^\dagger\Vert\sigma)^{1/{2\alpha}}
 \le\sum_{j=0}^1t_j^{(\alpha-1)/(2\alpha)}\|F_j\|_2^{1/\alpha}.
 \label{eq:renyi-factor-sum}
\end{equation}
\end{lemma}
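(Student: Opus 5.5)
The plan has two parts: a single-term moment bound, and then a norm-type triangle inequality for the decomposition $F=F_0+F_1$.

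For the first claim, set $\Gamma=\sigma^{-(\alpha-1)/(2\alpha)}$ on $\supp\sigma$ and $X=\Gamma\tau\Gamma\ge0$. Then $\widetilde Q_\alpha(\tau\Vert\sigma)=\Tr X^\alpha=\Tr X^{\alpha-1}X$. I will bound $X^{\alpha-1}$ from above by an operator that absorbs the weight. Conjugating $\tau\le t\sigma$ by $\Gamma$ gives $X\le t\,\sigma^{1/\alpha}$. Since $0<\alpha-1\le1$, the map $x\mapsto x^{\alpha-1}$ is operator monotone (Löwner--Heinz), so $X^{\alpha-1}\le t^{\alpha-1}\sigma^{(\alpha-1)/\alpha}$. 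Because $X\ge0$,
\begin{equation*}
 \Tr X^\alpha=\Tr X^{1/2}X^{\alpha-1}X^{1/2}\le t^{\alpha-1}\Tr X\sigma^{(\alpha-1)/\alpha}
 =t^{\alpha-1}\Tr\Gamma\tau\Gamma\,\sigma^{(\alpha-1)/\alpha}.
\end{equation*}
The factors $\Gamma$ cancel against $\sigma^{(\alpha-1)/\alpha}$, leaving the projection onto $\supp\sigma$. Since $\tau$ is supported there, the right-hand side equals $t^{\alpha-1}\Tr\tau$. This is where the restriction $\alpha\le2$ is used.

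For the second claim, I rewrite $\widetilde Q_\alpha$ as a Schatten norm: $\widetilde Q_\alpha(FF^\dagger\Vert\sigma)=\Tr(\Gamma FF^\dagger\Gamma)^\alpha=\|\Gamma F\|_{2\alpha}^{2\alpha}$. Hence $\widetilde Q_\alpha(FF^\dagger\Vert\sigma)^{1/(2\alpha)}=\|\Gamma F\|_{2\alpha}$. This is a genuine norm in $F$, because $2\alpha\ge1$. The domination $FF^\dagger\le t\sigma$, for $t=t_0+t_1$ and up to a factor, implies $\ran F\subseteq\supp\sigma$, so no mass is lost outside $\supp\sigma$. The triangle inequality then gives $\|\Gamma F\|_{2\alpha}\le\|\Gamma F_0\|_{2\alpha}+\|\Gamma F_1\|_{2\alpha}$. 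Each term satisfies $\|\Gamma F_j\|_{2\alpha}^{2\alpha}=\widetilde Q_\alpha(F_jF_j^\dagger\Vert\sigma)$, and the first part applied to $\tau_j=F_jF_j^\dagger$ bounds this by $t_j^{\alpha-1}\Tr F_jF_j^\dagger=t_j^{\alpha-1}\|F_j\|_2^2$. Taking $2\alpha$-th roots gives $t_j^{(\alpha-1)/(2\alpha)}\|F_j\|_2^{1/\alpha}$, which is \eqref{eq:renyi-factor-sum}. Here $\widetilde Q_\alpha$ is defined for positive, not necessarily normalized, $\tau$, as in \eqref{eq:sandwiched-q}.

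I expect the main obstacle to be the support bookkeeping. The inverse powers live only on $\supp\sigma$, and both the cancellation step and the norm identity need $\ran F_j\subseteq\supp\sigma$, which follows from $F_jF_j^\dagger\le t_j\sigma$. The degenerate case $t_j=0$ forces $F_j=0$, so that term is harmless. With these points checked, the rest is direct.
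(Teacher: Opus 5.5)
Your proposal is correct and follows the paper's proof essentially step for step: you conjugate $\tau\le t\sigma$ by $\sigma^{-(\alpha-1)/(2\alpha)}$, apply operator monotonicity of $x\mapsto x^{\alpha-1}$ for $1<\alpha\le2$, and multiply by $X$ inside the trace to recover $t^{\alpha-1}\Tr\tau$; you then write $\widetilde Q_\alpha(FF^\dagger\Vert\sigma)^{1/(2\alpha)}=\|\sigma^{-(\alpha-1)/(2\alpha)}F\|_{2\alpha}$ and use the Schatten-norm triangle inequality. Your support bookkeeping, $\ran F_j\subseteq\supp\sigma$ from $F_jF_j^\dagger\le t_j\sigma$, is a correct detail that the paper leaves implicit.
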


\begin{proof}
The case $t=0$ has $\tau=0$. Otherwise work on $\supp\sigma$ and
put $B=\sigma^{-(\alpha-1)/{2\alpha}}\tau
\sigma^{-(\alpha-1)/{2\alpha}}$. Domination gives
$B\le t\sigma^{1/\alpha}$. The function $x\mapsto x^{\alpha-1}$
is operator monotone for $1<\alpha\le2$, so
$B^{\alpha-1}\le t^{\alpha-1}\sigma^{(\alpha-1)/\alpha}$.
Multiplying by $B$ inside the trace preserves this inequality:
\begin{equation}
 \Tr B^\alpha
 \le t^{\alpha-1}\Tr B\sigma^{(\alpha-1)/\alpha}
 =t^{\alpha-1}\Tr\tau.
 \label{eq:renyi-small-trace}
\end{equation}
For the second assertion,
$\widetilde Q_\alpha(FF^\dagger\Vert\sigma)^{1/{2\alpha}}
=\|\sigma^{-(\alpha-1)/{2\alpha}}F\|_{2\alpha}$.
Apply the Schatten-norm triangle inequality to $F_0+F_1$ and use
\eqref{eq:renyi-small-trace} with $\tau=F_jF_j^\dagger$.
As $\Tr F_jF_j^\dagger=\|F_j\|_2^2$, this gives
\eqref{eq:renyi-factor-sum}, including zero terms.
\end{proof}

\begin{proof}[Proof of \cref{thm:renyi-continuity}]
If $d=L$, the bounds $d\le d_\alpha\le L$ already prove continuity.
Otherwise fix $d<S<L$, the dilations, and auxiliary maps $C_n$ from
\cref{thm:approximation}. \cref{cor:exact} provides
$C_*$ with $\op{C_*}\le2^{L/2}$ in the same environments.

Fix any pure input with a reference for $n$ uses, and let
$\rho,\sigma$ be its output states. Reshape the null output
purification, its approximation, and their difference into matrices
$F,F_0,F_1$ whose columns are indexed by the null environment. Then
$F=F_0+F_1$, $FF^\dagger=\rho$, $\|F_0\|_2\le1+e_n$, and
$\|F_1\|_2\le e_n$, where $e_n\le Ke^{-\gamma n}$.
If $G$ is the alternative purification's column matrix, then
$GG^\dagger=\sigma$, $F_0=GC_n^{\mathsf T}$, and
$F_1=G(C_*^{\otimes n}-C_n)^{\mathsf T}$.
The last identity follows from the exact environmental representation.
Consequently,
\begin{equation*}
 F_0F_0^\dagger\le2^{nS}\sigma,\qquad
 F_1F_1^\dagger\le4\,2^{nL}\sigma,
\end{equation*}
since $\op{C_*^{\otimes n}-C_n}\le2 \cdot 2^{nL/2}$.
Here $F=F_0+F_1$ is a decomposition of purification matrices, not a
positive-operator decomposition of $\rho$; the cross terms are handled
by the Schatten-norm triangle inequality.

\cref{lem:renyi-moment} gives, uniformly over the input,
\begin{align*}
 \widetilde Q_\alpha(\rho\Vert\sigma)^{1/{2\alpha}}
 &\le(1+e_n)^{1/\alpha}2^{nS(\alpha-1)/{2\alpha}}\\
 &\quad+e_n^{1/\alpha}(4\,2^{nL})^{(\alpha-1)/{2\alpha}}.
\end{align*}
For fixed $S$ and $\alpha$, the two summands are bounded by constant
multiples of $2^{nu_0}$ and $2^{nu_1}$, where
$u_0=(\alpha-1)S/{2\alpha}$ and
$u_1=(\alpha-1)L/{2\alpha}-\gamma/(\alpha\ln2)$.
The constants are independent of the input and $n$. Hence the logarithm
of their sum is at most $n\max\{u_0,u_1\}+O(1)$, uniformly over all
inputs. Multiplying by $2\alpha/(\alpha-1)$, optimizing over the
input, and dividing by $n$ proves \eqref{eq:renyi-quantitative} upon
taking the limit.

For each fixed $S\in(d,L)$, the second term of the maximum in
\eqref{eq:renyi-quantitative} is below $S$ whenever $\alpha>1$ is
sufficiently close to one. Therefore
$d\le\liminf_{\alpha\downarrow1}d_\alpha
\le\limsup_{\alpha\downarrow1}d_\alpha\le S$.
Letting $S\downarrow d$ completes the proof.
\end{proof}

The exponential error is needed here because it yields the term
$-2\gamma/((\alpha-1)\ln2)$ in \eqref{eq:renyi-quantitative}.
A subexponential error bound would give no such reduction after division
by $n$.

\subsection{The adaptive converse}\label{sec:adaptive}

The Fawzi--Fawzi chain rule~\cite[Corollary~5.2]{FawziFawzi} applies
to different input states. For $\alpha>1$, channels $\N,\M:A\to B$,
and states $\rho,\sigma$ on a common reference and $A$, it states
\begin{equation}
 \begin{split}
 &\widetilde D_\alpha\bigl((\id\otimes\N)(\rho)
                    \Vert(\id\otimes\M)(\sigma)\bigr)\\
 &\qquad\le\widetilde D_\alpha(\rho\Vert\sigma)
           +\widetilde D_\alpha^\infty(\N\Vert\M).
 \end{split}
 \label{eq:renyi-chain}
\end{equation}
Combined with continuity, this bounds every use in an adaptive
protocol at a rate arbitrarily close to $d$.

\begin{proof}[Proof of \cref{thm:adaptive}]
Assume $d<\infty$ and fix $R>d$. By
\cref{thm:renyi-continuity}, choose $1<\alpha\le2$ such that
$d_\alpha=\widetilde D_\alpha^\infty(\N\Vert\M)<R$.
The two hypotheses start from the same state, with divergence zero.
Inductively, \eqref{eq:renyi-chain} adds at most $d_\alpha$ at each
use, while data processing prevents any increase under the
intervening CPTP maps. The final-state divergence is therefore at most
$nd_\alpha$. Each step permits an arbitrary common reference, so the
bound is independent of the memory dimensions.

Consider a final test with $q\le2^{-nR}$.
If $p=0$, the desired bound is immediate.
If $p>0$, finite final-state divergence implies $q>0$, and $R>d\ge0$ gives
$q<1$.
Data processing under the binary measurement, keeping only the term
$p^\alpha q^{1-\alpha}$ inside the logarithm, gives
\begin{align*}
 nd_\alpha
 &\ge\frac1{\alpha-1}
 \log\bigl[p^\alpha q^{1-\alpha}
           +(1-p)^\alpha(1-q)^{1-\alpha}\bigr]\\
 &\ge\frac\alpha{\alpha-1}\log p-\log q.
\end{align*}
Rearranging and using $-\log q\ge nR$ proves
\begin{equation*}
 p\le2^{-n\frac{\alpha-1}{\alpha}(R-d_\alpha)}.
\end{equation*}
Thus \eqref{eq:adaptive-power} holds for every $n\ge1$ with
$c_R=(\alpha-1)(R-d_\alpha)/\alpha>0$.

For fixed $\varepsilon$, the requirement $p\ge1-\varepsilon$ is
incompatible with this bound for sufficiently large $n$. Hence the
adaptive limit superior in \eqref{eq:adaptive-stein} is at most $d$.
The parallel block strategies in \cref{prop:parallel}
are adaptive strategies as well and give the matching lower bound.
If $d=+\infty$, \cref{lem:finite} and
\cref{prop:infinite} provide parallel tests with zero
type-II error at all sufficiently large blocklengths. Both fixed-error
limits are then $+\infty$.
\end{proof}

This proves positivity of the strong-converse exponent characterized
in~\cite[Theorem~5.5]{FawziFawzi} at every $R>d$.

\section{Uniform Exponential Approximation}\label{sec:approximation}

We now prove \cref{thm:approximation}. The one-shot comparison
below applies to arbitrary channel pairs. Finite divergence is needed
only afterwards, to obtain both a constant-error approximation from
relative entropy and an exact representation at finite cost rate.
Two tensor constructions then improve the approximation.

\subsection{From testing to a uniform approximation}\label{sec:one-shot}

We use the state hockey-stick quantity
$E_t(\rho\Vert\sigma)=\Tr(\rho-t\sigma)_+$ for $t\ge0$,
extending the definition for $t\ge1$ in~\cite{SharmaWarsi}.
For channels $\N,\M:A\to B$, define its channel extension by
\begin{equation}
 E_t(\N\Vert\M)
 =\sup_{\ket\psi}\Tr(\rho_1^\psi-t\sigma_1^\psi)_+
 =\sup_{\ket\psi,\,0\le T\le I_{RB}}(p-tq),
 \label{eq:hockey}
\end{equation}
where $\rho_1^\psi,\sigma_1^\psi$ are the output states in
\eqref{eq:outputs}, and $p=\Tr T\rho_1^\psi$ and
$q=\Tr T\sigma_1^\psi$. The second equality follows from the
variational formula for the positive part. Pure inputs with
$R\simeq A$ suffice, and $0\le E_t\le1$.
For $t\ge1$, this is the unrestricted-measurement case
of~\cite[Def.~3]{NuradhaSinghWilde}. For $0\le t<1$, that reference
subtracts $(1-t)_+$.

The next semidefinite representation converts a bound on all inputs
and tests into a single positive operator. It is an instance of the
semidefinite approach to completely bounded norms~\cite{WatrousSDP};
we give the primal and dual explicitly to fix the normalization.

\begin{lemma}[Channel testing duality]\label{lem:sdp}
Let $\N,\M:A\to B$ be channels, let $t\ge0$, and set
$\Delta=J_\N-tJ_\M$. Then
\begin{align}
 E_t(\N\Vert\M)
 &=\max_{\substack{\omega\in\mathcal S(A)\\
                    0\le Q\le\omega\otimes I_B}}\Tr\Delta Q
       \label{eq:sdp-primal}\\
 &=\min_{Y\ge0,\ Y\ge\Delta}\op{\Tr_BY}.
       \label{eq:sdp-dual}
\end{align}
Here $Q,Y$ act on $A\otimes B$. Both extrema are attained.
\end{lemma}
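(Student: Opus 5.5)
We prove \cref{lem:sdp} in three steps: rewrite the supremum in \eqref{eq:hockey} as the primal program \eqref{eq:sdp-primal}, compute its Lagrange dual, and check strong duality with attainment.

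\emph{Primal form.} For a unit vector $\ket\psi_{RA}$ with $R\simeq A$, write $\ket\psi=(I_R\otimes X^{\mathsf T})\ket\Omega$. Up to the identification of $R$ with $A$, this gives
\begin{equation*}
 \rho_1^\psi=(X\otimes I_B)J_\N(X\otimes I_B)^\dagger,\qquad
 \sigma_1^\psi=(X\otimes I_B)J_\M(X\otimes I_B)^\dagger,
\end{equation*}
with $\omega=X^\dagger X$ a state, suitably transposed. For a test $0\le T\le I_{RB}$ we have
\begin{equation*}
 p-tq=\Tr\Delta Q,\qquad Q=(X\otimes I)^\dagger T(X\otimes I).
\end{equation*}
This $Q$ satisfies $0\le Q\le\omega\otimes I_B$. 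Conversely, given feasible $(\omega,Q)$, take $X=\omega^{1/2}$. Set $T=(X^+\otimes I)Q(X^+\otimes I)$ on the support and $0$ elsewhere. Then $0\le T\le I$, and $Q$ is supported in $\supp\omega\otimes B$ because $Q\le\omega\otimes I$, so $T$ reproduces $Q$. Hence the supremum equals the primal value. The feasible set is compact, so the maximum is attained.

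\emph{Dual.} Introduce multipliers $Y\ge0$ for the constraint $Q\le\omega\otimes I_B$ and $Z\ge0$ for $Q\ge0$. The Lagrangian is
\begin{equation*}
 \Tr\Delta Q+\Tr Y(\omega\otimes I-Q)+\Tr ZQ.
\end{equation*}
Stationarity in $Q$ forces $Y=\Delta+Z$, that is, $Y\ge\Delta$. The remaining maximization over states, $\max_\omega\Tr(\Tr_BY)\omega$, equals $\op{\Tr_BY}$ because $\Tr_BY\ge0$. Weak duality is then the direct estimate
\begin{equation*}
 \Tr\Delta Q\le\Tr YQ\le\Tr Y(\omega\otimes I)\le\op{\Tr_BY}.
\end{equation*}
The first inequality uses $Y-\Delta\ge0$ and $Q\ge0$; the second uses $Y\ge0$ and $\omega\otimes I-Q\ge0$.

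\emph{Strong duality and attainment.} This is the main technical point. Two issues arise. First, the primal is not in standard conic form, since the objective couples $Q$ with the state $\omega$. Second, the dual value $\min\op{\Tr_BY}$ is itself a min–max, so we must check that the min is attained and equals the max. Our plan is to use Slater's condition. The primal has a strictly feasible point, $\omega=I/\dim A$ and $Q=\tfrac12\omega\otimes I$. The dual has one too, $Y=\Delta_++I$. Standard SDP strong duality then gives equal values with dual attainment. To make this rigorous, we would write the dual as $\min\{\lambda:\ \Tr_BY\le\lambda I,\ Y\ge\Delta,\ Y\ge0\}$, and the primal with the trace constraint $\Tr\omega=1$ carrying the multiplier $\lambda$. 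This is a standard-form SDP pair in the sense of \cite{WatrousSDP}. For dual attainment, the feasible $Y$ with value at most that of $\Delta_+$ form a compact set, because $\Tr Y\le\dim A\,\op{\Tr_BY}$. Alternatively, $Y=\Delta_+$ itself already shows the dual is bounded.
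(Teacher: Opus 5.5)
Your proposal is correct and follows essentially the same route as the paper: congruence by $\sqrt\omega$ (with the pseudoinverse for the converse) to reach the primal, and the Lagrangian dual recast as a standard-form SDP with a multiplier $\lambda$ for $\Tr\omega=1$. It then uses Slater's condition from explicit strictly feasible points, and dual attainment via compactness of a sublevel set using $\Tr Y\le(\dim A)\op{\Tr_BY}$. The differences are cosmetic: an extra multiplier for $Q\ge0$, and the dual Slater point $Y=\Delta_++I$ with $\lambda>\op{\Tr_BY}$ instead of the paper's $Y=yI$.
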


\begin{proof}
Identify the reference with $A$. Up to an isometry on the reference,
a pure input has the form $(\sqrt\omega\otimes I_A)\ket\Omega$
for some $\omega\in\mathcal S(A)$. Pulling the output effect $T$
through $\sqrt\omega$ gives
$Q=(\sqrt\omega\otimes I_B)T(\sqrt\omega\otimes I_B)$,
which satisfies $0\le Q\le\omega\otimes I_B$ and
$p-tq=\Tr\Delta Q$.
Conversely, this order constraint forces $Q$ to be supported on
$\supp\omega\otimes B$. Congruence by the inverse of
$\sqrt\omega$ on its support gives an effect on that subspace;
extending it by zero gives a valid $T$. This proves
\eqref{eq:sdp-primal}.

Introduce a positive multiplier $Y$ for
$Q\le\omega\otimes I_B$ and a real multiplier $\lambda$ for
$\Tr\omega=1$. On $Q,\omega\ge0$, the Lagrangian is
\[
 \lambda+\Tr[(\Delta-Y)Q]
       +\Tr[(\Tr_BY-\lambda I_A)\omega].
\]
Its supremum is finite exactly when $Y\ge\Delta$ and
$\Tr_BY\le\lambda I_A$. For positive $Y$, the least such
$\lambda$ is $\op{\Tr_BY}$, giving the dual in
\eqref{eq:sdp-dual}. Strict primal feasibility is witnessed by
$\omega=I_A/\dim A$ and $Q=\eta I_{AB}$ with
$0<\eta<1/\dim A$. Strict dual feasibility follows by taking
$Y=yI_{AB}$ with $y>\max\{0,\op\Delta\}$ and then
$\lambda>y\dim B$. Semidefinite strong duality applies.
The primal feasible set is compact. On a dual sublevel set,
$Y\ge0$ and $\Tr Y\le(\dim A)\op{\Tr_BY}$ bound the norm
of $Y$; the set is also closed. This proves attainment on both sides.
\end{proof}

Fix dilations $V_N:A\to B\otimes\mathsf E_N$ and
$V_M:A\to B\otimes\mathsf E_M$. For $t\ge0$, let
\begin{equation}
 s_t=\min_{\substack{C:\mathsf E_M\to\mathsf E_N\\
                       \op C\le\sqrt t}}
       \op{V_N-(I_B\otimes C)V_M}.
 \label{eq:st-distance}
\end{equation}
The following comparison turns testing control into an approximation
in the prescribed environment. This fixed codomain will make it
possible to subtract and tensor auxiliary maps constructed at different
rates.

\begin{proposition}
\label{prop:environment}
Let $\N,\M:A\to B$ be channels. Fix Stinespring isometries
$V_N:A\to B\otimes\mathsf E_N$ and
$V_M:A\to B\otimes\mathsf E_M$ for $\N$ and $\M$, respectively,
and define $s_t$ by \eqref{eq:st-distance} using these isometries.
Then
\begin{equation*}
 s_t^2\le E_t(\N\Vert\M)\le2s_t-s_t^2
 \qquad(t\ge0).
\end{equation*}
\end{proposition}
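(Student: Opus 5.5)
The plan is to prove the two inequalities separately. The upper bound $E_t\le 2s_t-s_t^2$ will come from the acceptance-amplitude bound of \cref{lem:amplitude}. The lower bound $s_t^2\le E_t$ will come from the dual form \eqref{eq:sdp-dual} of \cref{lem:sdp}, combined with the factorization \cref{lem:factor} and the reshaping identity \eqref{eq:reshape-norm}. First I record two facts. The minimum in \eqref{eq:st-distance} is attained, since the constraint set is compact and the objective is continuous. Also $s_t\le\op{V_N}=1$, by taking $C=0$.

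\emph{Upper bound.} Take a minimizer $C$, so $\op C\le\sqrt t$ and $e=s_t$. For any input and test, \cref{lem:amplitude} gives $\sqrt p\le\sqrt t\sqrt q+s_t$. Write $a=\sqrt{tq}\ge0$. Then $p\le\min\{1,(a+s_t)^2\}$, and we need to bound $p-a^2$.
\begin{itemize}
\item If $a\le1-s_t$, then $p-a^2\le 2as_t+s_t^2\le 2(1-s_t)s_t+s_t^2=2s_t-s_t^2$.
\item If $a>1-s_t$, then $p-a^2\le1-a^2<1-(1-s_t)^2=2s_t-s_t^2$.
\end{itemize}
Taking the supremum in \eqref{eq:hockey} gives $E_t\le2s_t-s_t^2$.

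\emph{Lower bound.} Take an optimal dual $Y\ge0$ with $Y\ge J_\N-tJ_\M$ and $\op{\Tr_BY}=E_t(\N\Vert\M)$. Factor $Y=GG^\dagger$. With the column matrices $F_N,F_M$ of \cref{sec:algebra}, this gives
\begin{equation*}
 F_NF_N^\dagger\le tF_MF_M^\dagger+GG^\dagger=WW^\dagger,\qquad W=[\sqrt t\,F_M\ \ G].
\end{equation*}
\cref{lem:factor} with constant $1$ yields $F_N=WD$ with $\op D\le1$. Splitting $D$ into blocks $D_1,D_2$ gives $F_N=\sqrt t\,F_MD_1+GD_2$, where $D_1D_1^\dagger\le I$ and $D_2D_2^\dagger\le I$.

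Set $C=\sqrt t\,D_1^{\mathsf T}:\mathsf E_M\to\mathsf E_N$. Then $\op C\le\sqrt t$. By the column convention, the column matrix of $(I_B\otimes C)V_M$ is $\sqrt t\,F_MD_1$. Hence the residual $V_N-(I_B\otimes C)V_M$ has column matrix $GD_2$.

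By \eqref{eq:reshape-norm}, which does not require an isometry, and since $GD_2D_2^\dagger G^\dagger\le GG^\dagger$ and partial trace is positive,
\begin{equation*}
 \op{V_N-(I_B\otimes C)V_M}^2=\op{\Tr_B GD_2D_2^\dagger G^\dagger}\le\op{\Tr_BY}=E_t.
\end{equation*}
Therefore $s_t^2\le E_t$.

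\emph{Main obstacle.} The main obstacle is the lower bound, specifically the bookkeeping that makes the factor produced by Douglas factorization a map between the \emph{prescribed} environments. Stacking $\sqrt t\,F_M$ next to $G$ is what achieves this: the $F_M$-block of $D$ acts on $\mathsf E_M$, while the extra columns from $G$ are absorbed entirely into the residual. The remaining work is to check the transpose convention, as in \cref{cor:exact}, and that dropping $D_2$ only decreases $\Tr_B$ in operator order.
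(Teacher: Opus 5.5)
Your proof is correct. The upper bound is the paper's argument in different packaging. The paper sets $z=\sqrt p-\sqrt{tq}\in(0,s_t]$ and uses that $2z-z^2$ is increasing, whereas you split into the cases $a\le1-s_t$ and $a>1-s_t$.

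For the lower bound you use the same ingredients: a dual optimizer $Y$, \cref{lem:factor}, and the reshaping identity \eqref{eq:reshape-norm}. What differs is how you obtain the splitting $F_N=F_0+F_1$. The paper puts $H=tJ_\M+Y$ and defines $F_0=tJ_\M H^+F_N$ and $F_1=YH^+F_N$ explicitly. It proves $F_0F_0^\dagger\le tJ_\M$ and $F_1F_1^\dagger\le Y$ through the auxiliary inequality $ZH^+Z\le Z$ for $0\le Z\le H$. It then applies \cref{lem:factor} a second time to write $F_0=F_MD$. You apply \cref{lem:factor} once, with constant $1$, to the stacked matrix $W=[\sqrt t\,F_M\ \ G]$. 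Both bounds then follow by compressing $DD^\dagger\le I$ to its diagonal blocks.

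The two constructions in fact coincide. Take the factor $D=W^+F_N=W^\dagger H^+F_N$ built in the proof of \cref{lem:factor}; note that $WW^\dagger=H$. Your blocks then give $\sqrt t\,F_MD_1=tJ_\M H^+F_N$ and $GD_2=YH^+F_N$, which are exactly the paper's $F_0$ and $F_1$. Moreover, $\sqrt t\,D_1=tF_M^\dagger H^+F_N$ equals the paper's second factor $F_M^+F_0$, so even the auxiliary map $C$ is the same.

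Your route is therefore a shorter derivation of the same object. It replaces the pseudoinverse inequality and the second factorization with a single block-matrix observation. The paper's explicit formulas have the minor advantage of not depending on a choice of factor $G$ of $Y$, and they make the role of $H^+$ visible.
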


\begin{proof}
The set of maps $C:\mathsf E_M\to\mathsf E_N$ satisfying
$\op C\le\sqrt t$ is compact in finite dimensions. Since
$C\mapsto\op{V_N-(I_B\otimes C)V_M}$ is continuous, the minimum
defining $s_t$ is attained. The choice
$C=0$ gives $s_t\le1$. Take a dual optimizer $Y$ from
\cref{lem:sdp}, and put $A_0=tJ_\M$ and $H=A_0+Y$.
The dual constraint $Y\ge J_\N-tJ_\M$ implies
$J_\N\le tJ_\M+Y=H$. Write the fixed dilations as
$V_N\ket x=\sum_j N_j\ket x\otimes\ket j$ and
$V_M\ket x=\sum_k M_k\ket x\otimes\ket k$ in orthonormal environment bases.
As in \cref{sec:algebra}, collect the vectorized Kraus operators into
the column matrices
$F_N=[\vecop(N_1)\ \cdots\ \vecop(N_u)]$ and
$F_M=[\vecop(M_1)\ \cdots\ \vecop(M_v)]$, where
$\vecop(K)=\sum_i\ket i\otimes K\ket i$.
These satisfy $F_NF_N^\dagger=J_\N$ and
$F_MF_M^\dagger=J_\M$. Define
$F_0=A_0H^+F_N$ and $F_1=YH^+F_N$.
Here $H^+$ is the Moore--Penrose pseudoinverse of $H$.
Since $\ran F_N\subseteq\supp H$, we have
$F_0+F_1=HH^+F_N=F_N$. Moreover,
\begin{equation}
 F_0F_0^\dagger\le A_0H^+A_0\le A_0,
 \qquad F_1F_1^\dagger\le YH^+Y\le Y.
 \label{eq:factor-order}
\end{equation}
The first inequality in each chain uses $J_\N\le H$. For the second,
let $0\le Z\le H$ and put $K=(H^+)^{1/2}Z^{1/2}$. Then
$KK^\dagger\le I$, hence $K^\dagger K\le I$.
Congruence of $Z^{1/2}H^+Z^{1/2}\le I$ by $Z^{1/2}$ gives
$ZH^+Z\le Z$. Apply this with $Z=A_0$ and $Z=Y$.

Since $F_0F_0^\dagger\le A_0=tF_MF_M^\dagger$,
\cref{lem:factor} gives $F_0=F_MD$ for a map
$D:\mathsf E_N\to\mathsf E_M$ with $\op D\le\sqrt t$.
Set $C=D^{\mathsf T}:\mathsf E_M\to\mathsf E_N$.
Transposition preserves the operator norm, so $C$ is feasible in
\eqref{eq:st-distance}. Under the column convention of
\cref{sec:algebra}, applying $I_B\otimes C$ to $V_M$ changes its
Kraus-column matrix to $F_MC^{\mathsf T}=F_MD=F_0$.
More explicitly, write $C\ket k=\sum_j C_{jk}\ket j$.
For any input vector $\ket x$, the residual acts as
\begin{equation*}
 \bigl[V_N-(I_B\otimes C)V_M\bigr]\ket x
 =\sum_j\left(N_j-\sum_k C_{jk}M_k\right)\ket x\otimes\ket j.
\end{equation*}
Its $j$th column is therefore
$\vecop(N_j-\sum_k C_{jk}M_k)
=\vecop(N_j)-\sum_k C_{jk}\vecop(M_k)$.
This is precisely the $j$th column of $F_N-F_0$, so the residual
has column matrix $F_N-F_0=F_1$.

The reshaping identity \eqref{eq:reshape-norm} expresses the squared
operator norm of this residual as $\op{\Tr_BF_1F_1^\dagger}$.
By \eqref{eq:factor-order}, $F_1F_1^\dagger\le Y$; taking the
partial trace preserves this order, as does taking the operator norm
of positive operators. Since $Y$ is a dual optimizer, we obtain
\begin{equation*}
 \op{V_N-(I_B\otimes C)V_M}^2
 =\op{\Tr_BF_1F_1^\dagger}
 \le\op{\Tr_BY}=E_t(\N\Vert\M).
\end{equation*}
Minimizing over feasible $C$ gives $s_t^2\le E_t(\N\Vert\M)$.
The argument also applies at $t=0$, when the factorization forces
$D=0$ and hence $C=0$.

For the upper bound, choose an auxiliary map $C$ attaining the minimum
in \eqref{eq:st-distance}. Its approximation error is $s_t$ and
$\op C\le\sqrt t$, so \cref{lem:amplitude} gives
\begin{equation*}
 \sqrt p\le\op C\sqrt q+s_t\le\sqrt{tq}+s_t
\end{equation*}
for every input and effect, with acceptance probabilities $p,q$.
If the testing objective $p-tq$ is positive, then
$z=\sqrt p-\sqrt{tq}>0$, and the preceding inequality gives
$z\le s_t\le1$. Factoring the difference of squares yields
\begin{equation*}
 p-tq=z(\sqrt p+\sqrt{tq})
      =z(2\sqrt p-z)
      \le2z-z^2\le2s_t-s_t^2.
\end{equation*}
Here the first inequality uses $\sqrt p\le1$, and the second uses
that $z\mapsto2z-z^2$ is increasing on $[0,1]$.
If $p-tq\le0$, the same bound holds because $2s_t-s_t^2\ge0$.
Thus every input and effect satisfies $p-tq\le2s_t-s_t^2$.
Taking the supremum of $p-tq$ over all inputs and effects gives
$E_t(\N\Vert\M)$ by \eqref{eq:hockey}, proving the upper bound.
\end{proof}

Uniform CP approximations from testing information also appear
in~\cite{Gour}. Here the approximation is expressed by one linear
auxiliary map with a fixed codomain, independently of both the input
and the test. This is the form required by the tensor constructions.
For a channel pair with $d=D^\infty(\N\Vert\M)<\infty$, abbreviate
$E_n(r)=E_{2^{nr}}(\N^{\otimes n}\Vert\M^{\otimes n})$.
The next lemma supplies the only information-theoretic estimate
needed to begin amplification.

\begin{lemma}\label{lem:weak-tail}
Let $\N,\M:A\to B$ be channels with $d=D^\infty(\N\Vert\M)<\infty$.
For every $r>0$ and $n\ge1$,
\begin{equation}
 E_n(r)\le\min\left\{1,\frac{nd+1}{nr}\right\}.
 \label{eq:weak-tail}
\end{equation}
\end{lemma}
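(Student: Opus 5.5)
The bound $E_n(r)\le1$ is already recorded after \eqref{eq:hockey}, so only the second term needs proof. I will use binary data processing for relative entropy, applied to an arbitrary $n$-use parallel input and effect. By the second expression in \eqref{eq:hockey} for the channel pair $\N^{\otimes n},\M^{\otimes n}$, it suffices to show that every pure input $\ket\psi$ (with $R\simeq A^n$) and every effect $0\le T\le I$ with acceptance probabilities $p,q$ satisfy
\begin{equation*}
 p-2^{nr}q\le\frac{nd+1}{nr}.
\end{equation*}

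First, data processing under the measurement $\{T,I-T\}$ gives
\begin{equation*}
 D(\rho_n^\psi\Vert\sigma_n^\psi)\ge p\log\frac pq+(1-p)\log\frac{1-p}{1-q}.
\end{equation*}
The standard estimate $-\binary(p)\ge-1$ together with $(1-p)\log\frac1{1-q}\ge0$ bounds the right side below by $-p\log q-1$. The left side is at most $a_n\le nd$ by \eqref{eq:d-setup}. Hence $-p\log q\le nd+1$, with the usual convention when $p=0$.

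Now split into cases. If $q\ge2^{-nr}$, then $2^{nr}q\ge1\ge p$, so $p-2^{nr}q\le0$ and the claim is trivial. If $q<2^{-nr}$, then $-\log q>nr$. The previous inequality gives $p\,nr< nd+1$, so $p-2^{nr}q\le p<(nd+1)/(nr)$.

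Taking the supremum over inputs and effects yields \eqref{eq:weak-tail}. The only point needing care is the support case. If $q=0$ and $p>0$, the relative entropy is infinite, which contradicts $nd<\infty$. \Cref{lem:finite} guarantees support inclusion, so this case does not occur. No step is a real obstacle; the argument is the weak converse \eqref{eq:intro-weak-converse} rephrased as a hockey-stick bound.
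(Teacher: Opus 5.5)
Your proposal is correct and follows essentially the same route as the paper: binary data processing under $\{T,I-T\}$, dropping the nonnegative term $(1-p)\log(1/(1-q))$, bounding $\binary(p)\le1$, using $\log(1/q)>nr$ when $q<2^{-nr}$, and ruling out $q=0<p$ via the finite domination of \cref{lem:finite}. The only difference is ordering: the paper first restricts to inputs and effects with positive objective $p-2^{nr}q>0$, which gives $0<q<2^{-nr}$ before invoking data processing. You derive $-p\log q\le nd+1$ for all $q$ and then split cases; with support inclusion this is equally valid.
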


\begin{proof}
Fix a pure input $\ket\psi_{RA^n}$ for $n$ parallel channel uses,
and let $\rho_n^\psi,\sigma_n^\psi$ be the output states defined in
\eqref{eq:outputs}. Choose an effect $0\le T\le I_{RB^n}$
corresponding to acceptance of the null hypothesis. The acceptance
probabilities are $p=\Tr T\rho_n^\psi$ under $\N^{\otimes n}$
and $q=\Tr T\sigma_n^\psi$ under $\M^{\otimes n}$.
By \eqref{eq:hockey}, $E_n(r)$ is the supremum of $p-2^{nr}q$
over these inputs and effects. Since the claimed upper bound is
nonnegative, every choice with $p-2^{nr}q\le0$ already satisfies it.
It therefore suffices to consider choices with $p-2^{nr}q>0$.
For such a choice, $p>0$ and $q<2^{-nr}$, since $p\le1$.
By \cref{lem:finite}, there is a
finite $c\ge1$ such that $\rho_n^\psi\le c^n\sigma_n^\psi$.
Taking the trace against the positive effect $T$ gives
$p=\Tr T\rho_n^\psi\le c^n\Tr T\sigma_n^\psi=c^nq$.
Thus $q=0$ would force $p=0$, contradicting $p>0$.
By the definition of the channel relative entropy and
\eqref{eq:d-setup}, this input satisfies
$D(\rho_n^\psi\Vert\sigma_n^\psi)\le a_n\le nd$.
The binary measurement $\{T,I-T\}$ maps the two output states to
the probability distributions $(p,1-p)$ and $(q,1-q)$.
Relative entropy cannot increase under this measurement, so
\begin{align*}
 nd\ge D(\rho_n^\psi\Vert\sigma_n^\psi)
 &\ge p\log\frac pq+(1-p)\log\frac{1-p}{1-q}\\
 &=p\log(1/q)+(1-p)\log(1/(1-q))-\binary(p)\\
 &\ge p\log(1/q)-\binary(p).
\end{align*}
The equality uses
$\binary(p)=-p\log p-(1-p)\log(1-p)$, and the last inequality
drops the nonnegative term $(1-p)\log(1/(1-q))$.
Since $0<q<2^{-nr}$, we have $\log(1/q)>nr$; also
$\binary(p)\le1$. Consequently, $nd\ge pnr-1$, or
$p\le(nd+1)/(nr)$. Since $p-2^{nr}q\le p$, every positive
objective value is at most $(nd+1)/(nr)$. The zero effect and
$E_n(r)\le1$ prove \eqref{eq:weak-tail}.

\end{proof}

We now convert this testing bound into the approximation needed for
the tensor argument. Fix Stinespring isometries $V_N,V_M$ for the
two channels, and choose a rate $r>d$. Since $d/r<1$, we can choose
a constant $\delta$ with $\sqrt{d/r}<\delta<1$. By
\cref{lem:weak-tail},
\begin{equation*}
 E_n(r)\le\frac{nd+1}{nr}
 =\frac dr+\frac1{nr}\le\delta^2
\end{equation*}
for all sufficiently large $n$; the last inequality holds because
$\delta^2-d/r>0$ is fixed.

Apply \cref{prop:environment} to the channels
$\N^{\otimes n},\M^{\otimes n}$, using the fixed tensor-power
dilations $V_N^{\otimes n},V_M^{\otimes n}$ and $t=2^{nr}$.
The bound $s_t^2\le E_t$ says that the minimum squared approximation
error under the norm constraint $\op C\le\sqrt t$ is at
most $E_n(r)$. Since this minimum is attained, there is a map
$C_n^{\mathrm w}:\mathsf E_M^{\otimes n}\to\mathsf E_N^{\otimes n}$
such that
\begin{equation*}
 \op{C_n^{\mathrm w}}^2\le2^{nr},\qquad
 \op{V_N^{\otimes n}-(I_{B^n}\otimes C_n^{\mathrm w})V_M^{\otimes n}}
 \le\sqrt{E_n(r)}\le\delta.
\end{equation*}
Thus the weak testing bound supplies an auxiliary map at
any rate $r>d$, with an operator-norm error uniformly bounded by a
constant strictly below one. The same auxiliary map works for every
input and reference system. This estimate alone does not show that
the error tends to zero as $n$ grows.

When $r<L$, the next lemma combines these auxiliary maps with the exact auxiliary map
$C_*$ from \cref{cor:exact}, whose tensor powers have zero error at
the finite rate $L$. The weak auxiliary maps supply the lower rate $r$,
while the exact auxiliary map supplies an accurate starting point and
a finite bound on the cost of corrections. The tensor argument uses
the strict inequality $\delta<1$ to control terms containing many
corrections, and obtains vanishing error at every rate above $r$.
If $r\ge L$, the exact auxiliary maps already meet the rate bound.

\subsection{Two-rate tensor amplification}\label{sec:amplification}

We improve the approximation in two stages. In this subsection, we
combine a low-rate approximation of constant error with an accurate
representation at a higher rate, obtaining vanishing error at every
rate above the low rate. In \cref{sec:exponential}, we then fix one
sufficiently accurate block and expand it together with its exact
residual to obtain exponentially small error, with an arbitrarily
small further increase in rate.

The argument is algebraic and does not use channel divergences.
For isometries $V:A\to B\otimes\mathsf E$ and $W:A\to B\otimes\mathsf F$, write
$\mathcal T_n(C)=(I_{B^n}\otimes C)W^{\otimes n}$ for
$C:\mathsf F^{\otimes n}\to\mathsf E^{\otimes n}$.
Canonical permutations group the output and environment factors.
The maps $\mathcal T_n$ are linear and satisfy
\begin{equation}
 \mathcal T_{n+m}(C\otimes D)
 =\mathcal T_n(C)\otimes\mathcal T_m(D).
 \label{eq:T-tensor}
\end{equation}
Throughout this and the next subsection, an auxiliary map at blocklength
$n$ is a linear map from $\mathsf F^{\otimes n}$ to
$\mathsf E^{\otimes n}$. Its cost rate is at most $r$ when its squared
norm is at most $2^{nr}$. The next lemma combines a low-rate
constant-error approximation with a finite-rate exact representation.

\begin{lemma}[Two-rate tensor amplification]\label{lem:amplification}
Suppose $V=\mathcal T_1(C_*)$ with $\op{C_*}\le2^{L/2}$ for
some $L\ge0$. Assume that, for some $0\le r<L$ and $0\le\delta<1$,
for every sufficiently large $n$, there exists an auxiliary map $C_n^{\mathrm w}$
with $\op{C_n^{\mathrm w}}\le2^{nr/2}$ and
$\op{\mathcal T_n(C_n^{\mathrm w})-V^{\otimes n}}\le\delta$.

Then for every $R>r$, there exist $K,\gamma>0$ and $a\in(0,1]$ such that,
for all sufficiently large $N$, there are auxiliary maps $C_N$ with
\begin{equation}
 \op{C_N}\le2^{NR/2},\qquad
 \op{\mathcal T_N(C_N)-V^{\otimes N}}\le Ke^{-\gamma N^a}.
 \label{eq:amp-conclusion}
\end{equation}
\end{lemma}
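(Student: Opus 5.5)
The plan follows the overview in \cref{sec:proof-strategy}: expand a binomial tensor power and truncate it by the number of correction factors, then iterate. The basic step takes two auxiliary maps at a common blocklength $k$. The first is a low-rate map $X$ (via $\mathcal T_k$) at rate $r$ with error $\op{\mathcal T_k(X)-V^{\otimes k}}\le\delta<1$. The second is a map $Y$ at a higher rate $\ell$ whose error $\eta$ against $V^{\otimes k}$ is small. Set $H=Y-X$, so that $\op{\mathcal T_k(H)}\le\delta+\eta$. Expanding $Y^{\otimes m}=(X+H)^{\otimes m}$ and keeping only words with at most $\theta m$ factors $H$ defines $Z_m$. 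Each word costs at most $2^{k(r(1-\theta)+\ell\theta)m/2}$ in norm, and there are at most $2^{m\binary(\theta)}$ words when $\theta\le1/2$. In general the count is at most $2^m$, so I will choose $\theta$ adapted to the regime. The omitted tail is bounded through \eqref{eq:T-tensor}. Each omitted word $w$ satisfies $\mathcal T_{km}(w)=\bigotimes(\text{factors})$, with at least $\theta m$ factors of norm at most $\delta+\eta$ and the remaining factors of norm at most $1+\delta$. The tail, however, should not be bounded word by word.

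\emph{Main obstacle.} Summing the tail naively gives a factor of up to $2^m$, which kills $\delta^{\theta m}$ unless $\delta$ is tiny. I will instead exploit that $\mathcal T$ maps everything into outputs of $V^{\otimes\cdot}$. The key identity is $\mathcal T_k(X)=V^{\otimes k}-\mathcal T_k(H)+(\mathcal T_k(Y)-V^{\otimes k})$. The step I expect to be hardest, and that I would try first, is to estimate the truncation by a generating-function argument: write $\mathcal T_k(X)=P$ and $\mathcal T_k(H)=Q$, and bound $\op{\sum_{|w|_H>\theta m}w(P,Q)}$ by introducing a phase $z$ so that $(P+zQ)^{\otimes m}$ generates the graded pieces. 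Averaging $z$ over the unit circle against a truncated kernel extracts the low-degree part. The relevant operator norm is $\op{P+zQ}\le1+2\delta+\dots$, which is too crude. Instead I will pick $\theta$ close to one: the tail then consists of words with few $P$ factors, and there are at most $2^{m\binary(1-\theta)}$ of them. Each such word has norm at most $(\delta+\eta)^{\theta m}(1+\delta)^{(1-\theta)m}$, which is exponentially small in $m$ once $\theta$ is near one. The retained part has rate $r(1-\theta)+\ell\theta$, which is close to $\ell$, not $r$. So a single step only lowers the rate slightly, from $\ell$ to $\ell-\theta'(\ell-r)$ with $\theta'=1-\theta$ small. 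This is why the statement has $N^a$ and a fixed number of iterations.

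\emph{Iteration.} Start with $Y_0=C_*^{\otimes k}$, which has rate $L$ and error zero. At stage $j$ the current accurate map has rate $\ell_j$ and error $\eta_j$. Applying the step with block length $k_j$ (where the weak map $C^{\mathrm w}_{k_j}$ exists) and $m_j$ repetitions gives rate $\ell_{j+1}\le(1-\theta')\ell_j+\theta' r+m_j^{-1}\binary(\theta')$. Per block, the error against $V^{\otimes k_jm_j}$ is at most $e^{-c m_j}+m_j\eta_j(1+\delta)^{m_j}$, the last term coming from replacing $\mathcal T(Y)$ by $V$ via telescoping. Since $\ell_j-r$ contracts geometrically by $(1-\theta')$, a fixed number $J$ of stages brings the rate below $R$. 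The growing internal blocklengths force $m_j$ to be polynomially related: $m_{j+1}$ must be small relative to $\log(1/\eta_j)$ so that the $(1+\delta)^{m}$ blow-up is absorbed. Choosing $m_{j}\asymp N^{a_j}$ with decreasing exponents gives a final error $Ke^{-\gamma N^a}$ with $a=a_J>0$.

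\emph{Arbitrary $N$.} Finally, for $N$ that is not a product of the chosen block lengths, I write $N=kM+s$ with $s$ bounded, and tensor the constructed map with $C_*^{\otimes s}$. This costs only a constant factor $2^{sL/2}$ in norm, absorbed by slightly enlarging $R$, and preserves the error bound through \eqref{eq:T-tensor}.
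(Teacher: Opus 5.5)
Your proposal is correct and follows essentially the paper's argument. You truncate $(X+H)^{\otimes m}$ to words with at most a fraction $\theta$ (close to one) of $H$ factors and bound the tail by $2^{m\binary(1-\theta)}(\delta+\eta)^{\theta m}(1+\delta)^{(1-\theta)m}$. You then handle $\mathcal T(Y)^{\otimes m}-V^{\otimes km}$ by telescoping, iterate a fixed number of times to contract the rate from $L$ toward $r$ at a shrinking stretched-exponential rate, and pad with $C_*$.

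Only some bookkeeping needs adjusting, none of which affects the argument:
\begin{enumerate}
\item The rate overhead is $O(1/k_j)$, not $O(m_j^{-1})$. It comes from the roughly $2^m$ retained words and the factor $2$ in $\op{C^{\mathrm s}-C^{\mathrm w}}$, and it scales with the inner blocklength $k_j$, so $k_j$ must grow (or $k_0$ be taken large).
\item The padding remainder is $o(N)$ rather than bounded.
\end{enumerate}
Both vanish under your growing blocklengths, just as they do under the paper's choice $n=\lfloor\sqrt N\rfloor$.
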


\begin{proof}
For a rate $S$, let $\mathsf P(S)$ denote the following property:
there exist $K,\gamma>0$ and $a\in(0,1]$ such that, for every sufficiently
large $n$, there exists an auxiliary map $C_n^{\mathrm s}$ with
$\op{C_n^{\mathrm s}}\le2^{nS/2}$ and error
$e_n=\op{\mathcal T_n(C_n^{\mathrm s})-V^{\otimes n}}
\le Ke^{-\gamma n^a}$. All constants and the initial blocklength
may depend on $S$, but not on $n$. The exact auxiliary maps
$C_*^{\otimes n}$ give $\mathsf P(L)$ with zero error and $a=1$.

\smallskip
\noindent\emph{Step 1: Fix the truncation fraction.}
Set $A_0=1+\delta$ and $B_0=(1+\delta)/2<1$.
At $x=1$, the continuous function
$(\ln2)\binary(x)+(1-x)\ln A_0+x\ln B_0$ equals $\ln B_0<0$.
There are therefore $\beta\in(0,1)$ and $\kappa>0$ such that
\begin{equation}
 (\ln2)\binary(x)+(1-x)\ln A_0+x\ln B_0\le-\kappa
 \quad\text{for }\beta\le x\le1.
 \label{eq:entropy-tail}
\end{equation}
Keep these constants fixed throughout the iteration. Choosing
$\beta$ close to one ensures that the decay of $B_0^j$ outweighs
the number of omitted terms in the tensor expansion.

\smallskip
\noindent\emph{Step 2: Lower the rate of an accurate approximation.}
We show that
\begin{equation}
 \mathsf P(S)\Longrightarrow\mathsf P(S')
 \quad\text{if }S>r\text{ and }S'>(1-\beta)r+\beta S.
 \label{eq:bootstrap}
\end{equation}
Assume $\mathsf P(S)$. At a sufficiently large blocklength $n$,
put $X=\mathcal T_n(C_n^{\mathrm w})$,
$Y=\mathcal T_n(C_n^{\mathrm s})$, and $H=Y-X$.
Then $\op X\le A_0$. Since $e_n\to0$, we may also assume
$e_n\le(1-\delta)/2$, giving $\op H\le\delta+e_n\le B_0$.
Linearity expresses $H$ by the auxiliary map
$C_n^{\mathrm s}-C_n^{\mathrm w}$, whose norm is at most
$2\cdot2^{nS/2}$ because $S>r$.

Expand $Y^{\otimes m}=(X+H)^{\otimes m}$, and keep only terms with at most
$\lfloor\beta m\rfloor$ factors $H$.
In detail, for $J\subseteq\{1,\ldots,m\}$ let $G_{J,j}=H$ if $j\in J$ and
$G_{J,j}=X$ otherwise, and set
\begin{equation}
 Z_{n,m}=\sum_{\substack{J\subseteq\{1,\ldots,m\}\\
                         |J|\le\lfloor\beta m\rfloor}}
              \bigotimes_{j=1}^mG_{J,j}.
 \label{eq:truncation}
\end{equation}
All summands have the same domain and codomain. The expansion selects
$X$ or $H$ independently on each tensor factor.
Using $\binom mj\le2^{m\binary(j/m)}$ and
\eqref{eq:entropy-tail}, the omitted terms obey
\begin{align*}
 \op{Z_{n,m}-Y^{\otimes m}}
 &\le\sum_{j>\beta m}\binom mj A_0^{m-j}B_0^j\\
 &\le(m+1)e^{-\kappa m}.
\end{align*}

To represent $Z_{n,m}$ by an auxiliary map, recall that
$X=\mathcal T_n(C_n^{\mathrm w})$ and
$H=\mathcal T_n(C_n^{\mathrm s}-C_n^{\mathrm w})$.
For each subset $J\subseteq\{1,\ldots,m\}$, define
\begin{equation*}
 A_{J,j}=\begin{cases}
 C_n^{\mathrm s}-C_n^{\mathrm w},&j\in J,\\
 C_n^{\mathrm w},&j\notin J,
 \end{cases}
 \qquad
 D_{n,m}=\sum_{\substack{J\subseteq\{1,\ldots,m\}\\
                         |J|\le\lfloor\beta m\rfloor}}
                  \bigotimes_{j=1}^m A_{J,j}.
\end{equation*}
Each tensor product is a map from $\mathsf F^{\otimes nm}$ to
$\mathsf E^{\otimes nm}$. By \eqref{eq:T-tensor}, applying
$\mathcal T_{nm}$ to this tensor product gives
$\bigotimes_{j=1}^m G_{J,j}$, the corresponding summand in
\eqref{eq:truncation}. Linearity then gives
$\mathcal T_{nm}(D_{n,m})=Z_{n,m}$.
A subset of size $j$ contributes $m-j$ factors with norm at most
$2^{nr/2}$ and $j$ factors with norm at most $2\cdot2^{nS/2}$.
Multiplying these bounds gives $2^j2^{n[(m-j)r+jS]/2}$ for each
tensor product. There are $\binom mj$ subsets of size $j$, so
the triangle inequality gives
\begin{align}
 \op{D_{n,m}}
 &\le\sum_{j=0}^{\lfloor\beta m\rfloor}\binom mj2^j
          2^{n[(m-j)r+jS]/2}\notag\\
 &\le2^{nm[(1-\beta)r+\beta S]/2}
       \sum_{j=0}^{\lfloor\beta m\rfloor}\binom mj2^j\notag\\
 &\le3^m2^{nm[(1-\beta)r+\beta S]/2}.
 \label{eq:auxiliary-map-cost}
\end{align}
The second inequality uses $S>r$ and $j\le\beta m$ to bound
$(m-j)r+jS\le m[(1-\beta)r+\beta S]$. The last inequality extends
the sum to $j=m$ and uses $\sum_{j=0}^m\binom mj2^j=3^m$.

The truncation estimate above bounds $\op{Z_{n,m}-Y^{\otimes m}}$.
To bound the error relative to the target $V^{\otimes nm}$, we must
also control $\op{Y^{\otimes m}-(V^{\otimes n})^{\otimes m}}$,
since $Y$ only approximates $V^{\otimes n}$ with error $e_n$.
The telescoping expansion of this difference
has $m$ terms, each with one factor $Y-V^{\otimes n}$ and all
other factors of norm at most $1+e_n$. Hence the total error is
\begin{equation}
 \op{\mathcal T_{nm}(D_{n,m})-V^{\otimes nm}}
 \le(m+1)e^{-\kappa m}+me_n(1+e_n)^{m-1}.
 \label{eq:block-error}
\end{equation}

\smallskip
\noindent\emph{Step 3: Cover every sufficiently large blocklength.}
For a large integer $N$, choose $n=\lfloor\sqrt N\rfloor$,
$m=\lfloor N/n\rfloor$, and $\ell=N-nm$, so $0\le\ell<n$.
Pad with the exact auxiliary map:
$C_N=D_{n,m}\otimes C_*^{\otimes\ell}$.
The associated operator is $Z_{n,m}\otimes V^{\otimes\ell}$;
since $V$ is an isometry, padding does not increase the error.
The empty tensor power is interpreted as the scalar identity.
Writing $\bar S=(1-\beta)r+\beta S$, \eqref{eq:auxiliary-map-cost}
gives
\begin{equation*}
 \op{C_N}^2\le3^{2m}2^{nm\bar S+\ell L},\qquad
 \frac{nm\bar S+\ell L+2m\log3}{N}=\bar S+o(1).
\end{equation*}
The last identity uses $nm/N\to1$, $\ell/N\to0$, and $m/N\to0$.
Thus $\op{C_N}\le2^{NS'/2}$ for every fixed $S'>\bar S$ and
all sufficiently large $N$.

We bound the two error terms in \eqref{eq:block-error} separately.
Since $n=\lfloor\sqrt N\rfloor$ and $m=\lfloor N/n\rfloor$,
both are bounded above and below by positive constant multiples of
$\sqrt N$ for all large $N$. In particular, $n^a\ge bN^{a/2}$
for some $b>0$, and
\begin{equation*}
 me_n\le mK e^{-\gamma n^a}
 \le K_1\sqrt N\,e^{-\gamma bN^{a/2}}\longrightarrow0.
\end{equation*}
Thus $(1+e_n)^{m-1}\le e^{me_n}$ is eventually at most $2$.
The second error term therefore satisfies
\begin{equation*}
 me_n(1+e_n)^{m-1}
 \le2K_1\sqrt N\,e^{-\gamma bN^{a/2}}
 \le K_2e^{-(\gamma b/2)N^{a/2}}
\end{equation*}
for all large $N$. The last inequality uses
$\log N=o(N^{a/2})$, so the factor $\sqrt N$ can be absorbed
by halving the decay constant.

For the first error term, $m$ is of order $\sqrt N$, so
$(m+1)e^{-\kappa m}\le K_3e^{-\kappa_1\sqrt N}$ for suitable
$K_3,\kappa_1>0$, by the same argument for the polynomial prefactor.
Since $a\le1$, we have $\sqrt N\ge N^{a/2}$, giving the bound
$K_3e^{-\kappa_1N^{a/2}}$.
Adding the two bounds and recalling that exact padding preserves the
error yields
\begin{equation*}
 \op{\mathcal T_N(C_N)-V^{\otimes N}}
 \le K'e^{-\gamma' N^{a/2}}
\end{equation*}
for suitable $K',\gamma'>0$. Hence the new approximation satisfies
$\mathsf P(S')$ with exponent $a/2$, proving \eqref{eq:bootstrap}.

\smallskip
\noindent\emph{Step 4: Iterate a fixed number of times.}
Let $\theta=(1+\beta)/2\in(\beta,1)$ and
$S_j=r+\theta^j(L-r)$ for $j=0,1,\ldots$.
These rates start at $S_0=L$ and decrease to $r$. Moreover,
\begin{equation*}
 S_{j+1}-\bigl[(1-\beta)r+\beta S_j\bigr]
 =(\theta-\beta)\theta^j(L-r)>0,
\end{equation*}
so each successive pair meets the strict rate condition in
\eqref{eq:bootstrap}. Starting from the exact representation, which
gives $\mathsf P(L)$ with exponent $a=1$, we may therefore apply
that implication repeatedly. Each application halves the error
exponent, so after $j$ iterations we have $\mathsf P(S_j)$ with
$a=2^{-j}$.

Now fix a target rate $r<R<L$. Since $\theta^j(L-r)\to0$, a
finite $j$ satisfies $\theta^j(L-r)<R-r$, or equivalently $S_j<R$.
The auxiliary maps supplied by $\mathsf P(S_j)$ then satisfy
$\op{C_N}^2\le2^{NS_j}\le2^{NR}$ and have error at most
$Ke^{-\gamma N^{2^{-j}}}$ for all sufficiently large $N$.
We choose this $j$ before letting $N$ grow. Thus $2^{-j}>0$ is a
fixed exponent, and the constants and initial blocklength may depend
on $R$ but not on $N$. If $R\ge L$, the exact auxiliary maps
already satisfy the rate bound with zero error.
This proves \eqref{eq:amp-conclusion}.
\end{proof}

\subsection{Exponential accuracy from a fixed block}\label{sec:exponential}

We continue with the isometries $V,W$ and environment spaces of
\cref{sec:amplification}. The preceding lemma gives approximations
whose error tends to zero at every rate above the low rate. We now
use this vanishing error to obtain exponential decay, allowing an
arbitrarily small further increase in the cost rate.

Choose one sufficiently accurate approximation $X$ at blocklength $k$
and keep $k$ fixed as the total blocklength grows. The exact auxiliary
map provides a representation of the residual $H=V^{\otimes k}-X$,
so that
\begin{equation*}
 (X+H)^{\otimes m}=V^{\otimes km}.
\end{equation*}
Thus the full expansion equals the target exactly. Truncating it
introduces only the omitted-tail error; there is no additional error
from repeating an approximate target, as in \eqref{eq:block-error}.
Since $\op H$ can be made arbitrarily small by choosing $k$ large
enough, we can retain terms with only a small fraction of residual
factors while keeping the tail exponentially small in $m$.
This limits the increase in cost rate. With $k$ fixed, exponential
decay in $m$ also gives exponential decay in the total blocklength.

\begin{lemma}[Fixed-block exponential amplification]\label{lem:exponential}
Suppose $V=\mathcal T_1(C_*)$ with $\op{C_*}\le2^{L/2}$ for
some $L\ge0$. At a rate $S\ge0$, assume that there exist auxiliary
maps $A_n$ for all sufficiently large $n$, with $\op{A_n}\le2^{nS/2}$ and
$\varepsilon_n=\op{\mathcal T_n(A_n)-V^{\otimes n}}\to0$.

Then for every $R>S$, there exist $K,\gamma>0$ such that, for all sufficiently
large $N$, there are auxiliary maps $C_N$ satisfying
\begin{equation}
 \op{C_N}\le2^{NR/2},\qquad
 \op{\mathcal T_N(C_N)-V^{\otimes N}}\le Ke^{-\gamma N}.
 \label{eq:exp-output}
\end{equation}
\end{lemma}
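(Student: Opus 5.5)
We fix one block length $k$, expand $V^{\otimes km}=(X+H)^{\otimes m}$ with $X=\mathcal T_k(A_k)$ and exact residual $H=V^{\otimes k}-X$, and truncate to terms with at most $\lfloor\beta m\rfloor$ residual factors for a small fraction $\beta$. Since the full expansion equals the target exactly, the only error is the omitted tail, which decays exponentially in $m$ and hence in $N$ because $k$ is fixed.

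\emph{Step 1: choose the parameters.} Fix $R>S$ and put $\eta=(R-S)/2$. By \eqref{eq:T-tensor} and linearity, $H=\mathcal T_k(C_*^{\otimes k}-A_k)$. The corresponding auxiliary map has norm at most $2^{kS/2}+2^{kL/2}\le2\cdot2^{k\max\{S,L\}/2}$; write $L'=\max\{S,L\}$. Choose $\beta\in(0,1/2)$ so small that
\begin{equation*}
 \beta(L'-S)+\binary(\beta)+2\beta\cdot(\text{small})<\eta .
\end{equation*}
More precisely, we require that the cost estimate below stays under $R$. Then choose $k$ large enough that $\varepsilon_k$ is small. We need $\op X\le1+\varepsilon_k$ and $\op H\le\varepsilon_k$. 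Choosing $k$ large also makes $(2/k)$-type overheads and $\binary(\beta)/k$ negligible, and it makes the tail estimate negative, namely
\begin{equation*}
 (\ln2)\binary(x)+(1-x)\ln(1+\varepsilon_k)+x\ln\varepsilon_k\le-\kappa\qquad\text{for }\beta\le x\le1 .
\end{equation*}
This holds because $\ln\varepsilon_k\to-\infty$ while the other terms stay bounded. This is the step I expect to be the main obstacle: $\beta$ must be fixed first so the rate increase is below $\eta$, and only then can $k$ be chosen so that $\varepsilon_k$ is small enough for the tail bound at that fixed $\beta$. I will check that the constants can be ordered this way: $\beta$ first from $\eta$ and $L'$, then $k$. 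That ordering works because $\varepsilon_k\to0$.

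\emph{Step 2: error and cost at length $km$.} Define $D_m$ as the sum of tensor products over subsets $J$ with $|J|\le\lfloor\beta m\rfloor$, using $C_*^{\otimes k}-A_k$ on $J$ and $A_k$ elsewhere. Exactly as in the proof of \cref{lem:amplification}, we get $\mathcal T_{km}(D_m)=Z_m$. The truncation bound then gives
\begin{equation*}
 \op{Z_m-V^{\otimes km}}\le\sum_{j>\beta m}\binom mj(1+\varepsilon_k)^{m-j}\varepsilon_k^j\le(m+1)e^{-\kappa m}.
\end{equation*}
There is no repetition error, because $(X+H)^{\otimes m}=V^{\otimes km}$ holds exactly. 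For the cost we have
\begin{equation*}
 \op{D_m}\le\sum_{j\le\beta m}\binom mj2^j2^{k[(m-j)S+jL']/2}\le2^{m\binary(\beta)}2^{\beta m}2^{km[S+\beta(L'-S)]/2}.
\end{equation*}
Here we used $\binom mj\le2^{m\binary(\beta)}$ for $j\le\beta m<m/2$, and the sum has at most $m+1$ terms. The squared norm is therefore at most $(m+1)^2\,2^{km[S+\beta(L'-S)+2(\binary(\beta)+\beta)/k]}$. With the choices in Step~1, this exponent rate is below $S+\eta$ once $m$ is large.

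\emph{Step 3: cover every large $N$.} Write $N=km+\ell$ with $0\le\ell<k$ and set $C_N=D_m\otimes C_*^{\otimes\ell}$. Padding with the isometry $V$ preserves the error. The cost gains at most the fixed factor $2^{kL/2}$, and that factor together with $(m+1)$ is absorbed into the margin $\eta$ for large $N$. Hence $\op{C_N}\le2^{NR/2}$. Since $m\ge N/k-1$, the error satisfies $(m+1)e^{-\kappa m}\le Ke^{-\gamma N}$ with $\gamma=\kappa/(2k)$, after absorbing the polynomial prefactor. This gives \eqref{eq:exp-output}.
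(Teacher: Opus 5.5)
Your proposal is correct and follows essentially the same route as the paper. You fix the residual fraction before choosing a block length $k$ with $\varepsilon_k$ small, expand $(X+H)^{\otimes m}=V^{\otimes km}$ with the exact residual $H=\mathcal T_k(C_*^{\otimes k}-A_k)$, truncate, and pad with $C_*^{\otimes\ell}$. The paper does the same, and the differences are cosmetic: it bounds the tail by exponential tilting ($z=4^{1/\eta}$, $\varepsilon_k\le 1/(1+z)$, giving $2^{-m}$) and the cost by $3^m$, whereas you use the entropy estimate $\binom{m}{j}\le 2^{m h_2(j/m)}$ and let $L'=\max\{S,L\}$ replace the paper's separate case $R\ge L$. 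Your first cost display drops the factor $m+1$, which you correctly restore when squaring.
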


\begin{proof}
If $R\ge L$, the exact auxiliary maps suffice, so assume $S<R<L$.
Choose $\eta\in(0,1)$ with $\bar R=S+\eta(L-S)<R$, and set
$z=4^{1/\eta}$. Since $\varepsilon_n\to0$, there is a blocklength
$k$ large enough that $\varepsilon_k\le1/(1+z)$ and
$v:=\bar R+2\log3/k<R$. Keep $k$ fixed as $N\to\infty$.

Put $X=\mathcal T_k(A_k)$ and
$H=V^{\otimes k}-X=\mathcal T_k(C_*^{\otimes k}-A_k)$.
Then $\op X\le1+\varepsilon_k$, $\op H\le\varepsilon_k$,
and $(X+H)^{\otimes m}=V^{\otimes km}$ exactly. As in
\eqref{eq:truncation}, let $Z_m$ contain the terms containing at
most $\lfloor\eta m\rfloor$ factors $H$. The difference from
$V^{\otimes km}$ is the omitted sum of terms with $j>\eta m$
residual factors. There are $\binom mj$ terms with exactly $j$
such factors, each of norm at most
$(1+\varepsilon_k)^{m-j}\varepsilon_k^j$. Hence
\begin{align*}
 \op{Z_m-V^{\otimes km}}
 &\le\sum_{j>\eta m}\binom mj
              (1+\varepsilon_k)^{m-j}\varepsilon_k^j\\
 &\le z^{-\eta m}\sum_{j=0}^m\binom mj
              (1+\varepsilon_k)^{m-j}(z\varepsilon_k)^j\\
 &=z^{-\eta m}(1+\varepsilon_k+z\varepsilon_k)^m
 \le2^{-m}.
\end{align*}
For the second inequality, $z>1$ and $j>\eta m$ imply
$\varepsilon_k^j\le z^{-\eta m}(z\varepsilon_k)^j$.
We then extend the sum to all $0\le j\le m$, adding only
nonnegative terms, and apply the binomial theorem.
Finally, our choices $z=4^{1/\eta}$ and
$\varepsilon_k\le1/(1+z)$ give
$z^{-\eta}=1/4$ and $1+(1+z)\varepsilon_k\le2$.
The resulting bound is therefore $(\tfrac14\cdot2)^m=2^{-m}$.

Replace $X$ by $A_k$ and $H$ by $C_*^{\otimes k}-A_k$ in the
same truncated sum to obtain an auxiliary map $D_m$ with
$\mathcal T_{km}(D_m)=Z_m$. Since $S<L$, the auxiliary map
$C_*^{\otimes k}-A_k$ representing the correction has norm at most
$2\cdot2^{kL/2}$. Therefore
\begin{align*}
 \op{D_m}
 &\le\sum_{j\le\eta m}\binom mj2^j
          2^{k[(m-j)S+jL]/2}\\
 &\le3^m2^{km\bar R/2}=2^{kmv/2}.
\end{align*}
For $N=km+\ell$ with $0\le\ell<k$, set
$C_N=D_m\otimes C_*^{\otimes\ell}$. Exact padding preserves
the error, and
$\op{C_N}^2\le2^{kmv+\ell L}
=2^{Nv+\ell(L-v)}\le2^{NR}$ for all sufficiently large $N$:
the bounded remainder $\ell(L-v)$ is absorbed by $N(R-v)>0$.
Finally, $m=\lfloor N/k\rfloor$ implies
$2^{-m}\le2e^{-(\ln2)N/k}$. This proves \eqref{eq:exp-output}
with $K=2$ and $\gamma=(\ln2)/k$.
\end{proof}

We can now complete the approximation theorem. The rate margins are
chosen before the blocklength, so each application of amplification
uses constants independent of the total number of uses.

\begin{proof}[Proof of \cref{thm:approximation}]
Let $L$ be as in \cref{lem:finite}. \cref{cor:exact}
provides $C_*$ with $\op{C_*}\le2^{L/2}$.
For a target rate $S\ge L$, the auxiliary maps $C_*^{\otimes n}$
are exact and have the required norm. Otherwise $d<S<L$.
Fix rates $d<r<S_0<S$ and choose
$\delta\in(\sqrt{d/r},1)$. Combining \cref{lem:weak-tail}
with \cref{prop:environment} gives
the required constant-error approximation at rate $r$.
Apply \cref{lem:amplification} with
$V=V_N$, $W=V_M$, and target rate $S_0$ to obtain vanishing
error at that rate. \cref{lem:exponential}, with input rate
$S_0$ and target rate $S$, upgrades the error to an exponential.
Both lemmas cover every sufficiently large blocklength and preserve
the prescribed tensor-power environments. The exact case covers
$d=L$ as well, including $L=0$.
\end{proof}

\section{Testing and Smoothing Consequences}\label{sec:consequences}

The approximation theorem gives more than the fixed-error testing
limit. It also identifies the threshold of the optimized testing
tail and the asymptotic cost of approximating the null channel by a
subchannel dominated by the alternative. Throughout this section,
$\N,\M:A\to B$ are channels with $d=D^\infty(\N\Vert\M)<\infty$.

\subsection{A sharp testing threshold}\label{sec:hockey-threshold}

Below $d$, block repetition produces tests whose null acceptance tends
to one. Above $d$, the parallel converse forces that acceptance to
vanish whenever the testing objective is positive. These two facts
locate the threshold of the hockey-stick divergence.

\begin{corollary}[Hockey-stick threshold]\label{cor:threshold}
Let $\N,\M:A\to B$ be channels with $d=D^\infty(\N\Vert\M)<\infty$,
and set $E_n(r)=E_{2^{nr}}(\N^{\otimes n}\Vert\M^{\otimes n})$
for $r\ge0$, where $E_t$ is defined in \eqref{eq:hockey}. Then
\begin{equation*}
 \lim_{n\to\infty}E_n(r)=
 \begin{cases}
  1,&0\le r<d,\\
  0,&r>d.
 \end{cases}
\end{equation*}
For each fixed $r>d$, convergence to zero is exponential.
\end{corollary}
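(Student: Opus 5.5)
The two regimes will be handled separately, and the upper regime is essentially already done.

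\emph{The case $r>d$.} Fix $d<S<r$ and take the auxiliary maps $C_n$ from \cref{thm:approximation}, with $\op{C_n}\le2^{nS/2}$ and error $e_n\le Ke^{-\gamma n}$. For any parallel input and effect, \cref{lem:amplitude} gives $\sqrt p\le2^{nS/2}\sqrt q+e_n$. If $p-2^{nr}q>0$, then $q<2^{-nr}p\le2^{-nr}$, so
\begin{equation*}
 \sqrt p\le2^{-n(r-S)/2}+Ke^{-\gamma n}.
\end{equation*}
Hence $p-2^{nr}q\le p\le(2^{-n(r-S)/2}+Ke^{-\gamma n})^2$, which decays exponentially. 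Nonpositive objective values are trivially bounded. Taking the supremum in \eqref{eq:hockey} then gives exponential decay of $E_n(r)$.

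\emph{The case $0\le r<d$.} First pick $r<r'<d$, a block length $k$, and a pure $k$-use input with outputs $\rho,\sigma$ satisfying $D(\rho\Vert\sigma)/k>r'$. This is possible because $d=\sup_k a_k/k$. By the direct part of the state Stein lemma, there are effects $T_m$ on $m$ copies with $\Tr T_m\rho^{\otimes m}\to1$ and $\Tr T_m\sigma^{\otimes m}\le2^{-mkr'}$ for large $m$. For $n=mk+\ell$ with $0\le\ell<k$, use the block strategy of \cref{prop:parallel}: pad the remaining inputs with a fixed state and ignore their outputs. This gives $p_n\to1$ and $2^{nr}q_n\le2^{nr-mkr'}$. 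Since $mk\ge n-k$ and $r'>r$, the exponent $nr-mkr'\le n(r-r')+kr'\to-\infty$. Hence $E_n(r)\ge p_n-2^{nr}q_n\to1$, and with $E_n(r)\le1$ the limit is $1$. The case $r=0$ is covered as well, since $0<d$ is needed for this regime to be nonempty.

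\emph{Where care is needed.} The only delicate point is the form of the Stein direct part. We need a type-II error bounded by $2^{-mkr'}$ at a rate strictly below $D(\rho\Vert\sigma)$, together with null acceptance tending to one, not merely a fixed-$\varepsilon$ statement. This is exactly the assertion cited from~\cite{HiaiPetz,OgawaNagaoka} in the proof of \cref{prop:parallel}. If needed, I will obtain it from the fixed-$\varepsilon$ limit applied along a sequence $\varepsilon_j\downarrow0$ with a diagonal choice of $m$. The remainder bookkeeping for $\ell<k$ is routine, as it is in \cref{prop:parallel}.
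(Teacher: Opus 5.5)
Your proposal is correct and follows the paper's proof. For $r>d$ you inline the argument of \cref{prop:parallel} (the approximation theorem combined with \cref{lem:amplitude}, after noting that a positive objective forces $q<2^{-nr}$) instead of citing that proposition, and for $0\le r<d$ you use the same padded block-repetition construction with the direct part of the state Stein lemma, where your $kr'$ plays the role of the paper's $u\in(kr,D(\rho\Vert\sigma))$; your diagonal fallback over $\varepsilon_j\downarrow0$ for the strong direct part is also valid, though the paper simply cites that form.
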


\begin{proof}
A positive objective value $p-2^{nr}q$ implies $q<2^{-nr}$ and
is at most $p$. For $r>d$, \cref{prop:parallel} therefore
bounds every such value by $K_r e^{-\gamma_r n}$ for all large $n$.
Taking the supremum proves the exponential upper bound.

For $0\le r<d$, choose a blocklength $k$ and a pure input whose
output states $\rho,\sigma$ satisfy $D(\rho\Vert\sigma)/k>r$.
Fix $u$ with $kr<u<D(\rho\Vert\sigma)$. The direct part of the
state Stein lemma gives tests on $m$ copies with $p_m\to1$ and
$q_m\le2^{-mu}$ for all sufficiently large $m$.
At total blocklength $n=mk+\ell$, $0\le\ell<k$, use the padding
construction from \eqref{eq:block-direct}. It follows that
$E_n(r)\ge p_m-2^{nr-mu}\to1$, because $mu/n\to u/k>r$.
Together with $E_n(r)\le1$, this proves the lower-rate limit.
\end{proof}

No conclusion is asserted at the critical rate $r=d$.

\subsection{Uniform approximation by dominated subchannels}\label{sec:subchannels}

A subchannel is a completely positive trace-nonincreasing map.
For a Hermiticity-preserving map $\Phi:\mathcal L(A)\to\mathcal L(B)$,
its diamond norm can be evaluated as
\[
 \dn\Phi=\sup_{\substack{\ket\psi\in R\otimes A\\\|\ket\psi\|=1}}
 \trn{(\id_R\otimes\Phi)(\ket\psi\bra\psi)},
\]
where $R\simeq A$ suffices~\cite{Watrous,WatrousSDP}. We use the unhalved
norm throughout. Thus this distance controls all inputs and references
simultaneously.

The approximating dilation need not be a contraction. A rescaling
makes it one while preserving the domination rate and changing the
error by only a constant factor.

\begin{corollary}[Dominated subchannels]\label{cor:subchannel}
Let $\N,\M:A\to B$ be channels with $d=D^\infty(\N\Vert\M)<\infty$.
For every $S>d$, there are $K,\gamma>0$ and subchannels
$\mathcal L_n:A^n\to B^n$, for all sufficiently large $n$, such that
\begin{equation}
 \mathcal L_n\cpo2^{nS}\M^{\otimes n},\qquad
 \dn{\mathcal L_n-\N^{\otimes n}}\le Ke^{-\gamma n}.
 \label{eq:subchannel}
\end{equation}
\end{corollary}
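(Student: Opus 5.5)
We take $C_n$ and $e_n=\op{V_N^{\otimes n}-\widehat V_n}\le Ke^{-\gamma n}$ from \cref{thm:approximation} at rate $S$, where $\widehat V_n=(I_{B^n}\otimes C_n)V_M^{\otimes n}$. The candidate subchannel is
\[
 \mathcal L_n(X)=\frac{1}{(1+e_n)^2}\Tr_{\mathsf E_N^{\otimes n}}\widehat V_nX\widehat V_n^\dagger .
\]
It is CP, being a Kraus-type map. Since $\op{\widehat V_n}\le\op{V_N^{\otimes n}}+e_n=1+e_n$, the operator $W_n=\widehat V_n/(1+e_n)$ is a contraction. Hence $W_n^\dagger W_n\le I$, so $\mathcal L_n$ is trace nonincreasing.

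\emph{Domination.} Write $\widehat V_n=(I\otimes C_n)V_M^{\otimes n}$. For any input $\ket\psi$ on $R\otimes A^n$, the unnormalized output of $\id_R\otimes\mathcal L_n$ is
\[
 \Tr_{\mathsf E_N}\bigl[(I\otimes C_n)\,y y^\dagger\,(I\otimes C_n^\dagger)\bigr],\qquad y=(I_R\otimes V_M^{\otimes n})\ket\psi .
\]
The map $Z\mapsto\Tr_{\mathsf E_N}(I\otimes C_n)Z(I\otimes C_n^\dagger)$ is bounded in CP order by $\op{C_n}^2\,\Tr_{\mathsf E_M}Z$. To see this, use $C_n^\dagger C_n\le\op{C_n}^2 I$: for $Z\ge0$, take $\Tr_{\mathsf E}$ of $(I\otimes C_n)Z(I\otimes C_n)^\dagger$ against any positive operator on $RB^n$. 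The cyclicity of the partial trace over the environment (the test operator acts on $RB^n$ only) moves $C_n^\dagger C_n$ next to $Z$, which gives the bound. Equivalently, at the Choi/Kraus-column level this is $F_0F_0^\dagger\le\op{C_n}^2GG^\dagger$, as in the proof of \cref{thm:renyi-continuity}. Therefore $\mathcal L_n\cpo(1+e_n)^{-2}2^{nS}\M^{\otimes n}\cpo2^{nS}\M^{\otimes n}$. I would state this via the Choi operator, $J_{\mathcal L_n}\le2^{nS}J_{\M^{\otimes n}}$, which follows from the column identity $F_0=GC_n^{\mathsf T}$ with $\op{C_n^{\mathsf T}}=\op{C_n}$.

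\emph{Diamond error.} For a unit vector $\ket\psi$, set $x=(I_R\otimes V_N^{\otimes n})\ket\psi$ and $w=(I_R\otimes W_n)\ket\psi$. Then
\[
 \|x-w\|\le\|x-(I\otimes\widehat V_n)\ket\psi\|+\Bigl\|\Bigl(1-\frac{1}{1+e_n}\Bigr)(I\otimes\widehat V_n)\ket\psi\Bigr\|\le e_n+e_n=2e_n .
\]
Partial trace is trace-norm contractive, and for vectors $\trn{xx^\dagger-ww^\dagger}\le(\|x\|+\|w\|)\|x-w\|\le2\cdot2e_n$, using $\|x\|=1$ and $\|w\|\le1$. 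Taking the supremum over $\ket\psi$ in the diamond-norm formula gives
\[
 \dn{\mathcal L_n-\N^{\otimes n}}\le4e_n\le4Ke^{-\gamma n}.
\]

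The only step needing care is the CP-order domination through the auxiliary map acting on the environment before the partial trace. I expect it to be resolved cleanly by the Kraus-column (Choi) representation already used in \cref{sec:algebra}: the Choi operator of $\mathcal L_n$ is $(1+e_n)^{-2}F_0F_0^\dagger$ with $F_0=GC_n^{\mathsf T}$ and $GG^\dagger=J_{\M^{\otimes n}}$. The remaining steps are routine.
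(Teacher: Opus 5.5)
Your proposal is correct and follows essentially the same route as the paper. You rescale $\widehat V_n$ by $1+e_n$ to obtain a contraction, read off $J_{\mathcal L_n}\le 2^{nS}J_{\M^{\otimes n}}$ from the Kraus-column identity $F_0=GC_n^{\mathsf T}$, and bound the diamond error by $4e_n$ using the $2e_n$ vector estimate together with trace-norm contractivity of the partial trace; your additional partial-trace argument for the CP domination is also valid.
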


\begin{proof}
Fix dilations and take $C_n,e_n$ from
\cref{thm:approximation} at rate $S$. Put
$\widehat V_n=(I_{B^n}\otimes C_n)V_M^{\otimes n}$ and
$\widetilde V_n=\widehat V_n/(1+e_n)$.
Since $\op{\widehat V_n}\le1+e_n$, the operator
$\widetilde V_n$ is a contraction. It defines a CP map
$\mathcal L_n(X)=\Tr_{\mathsf E_N^{\otimes n}}
\widetilde V_nX\widetilde V_n^\dagger$. For $X\ge0$,
$\Tr\mathcal L_n(X)=\Tr X\widetilde V_n^\dagger\widetilde V_n
\le\Tr X$, so this map is a subchannel.

Let $F_{M,n}$ be the Kraus-column matrix of $V_M^{\otimes n}$ and
$D_n=(C_n/(1+e_n))^{\mathsf T}$. Then
\[
 J_{\mathcal L_n}=F_{M,n}D_nD_n^\dagger F_{M,n}^\dagger
 \le2^{nS}F_{M,n}F_{M,n}^\dagger
 =2^{nS}J_{\M^{\otimes n}}.
\]
This proves the CP domination. For the error, the identity
$(1+e_n)V_N^{\otimes n}-\widehat V_n
=e_nV_N^{\otimes n}+(V_N^{\otimes n}-\widehat V_n)$ gives
$\op{V_N^{\otimes n}-\widetilde V_n}
\le2e_n/(1+e_n)\le2e_n$.
For a pure input with any reference, let $x,y$ be the output vectors
of these two operators. They have norm at most one and
$\|x-y\|\le2e_n$. Decomposing their rank-one difference yields
\[
 \trn{\ket x\bra x-\ket y\bra y}
 \le(\|x\|+\|y\|)\|x-y\|\le4e_n.
\]
The partial trace is trace-norm contractive on Hermitian operators.
Taking that trace and optimizing over inputs therefore gives
$\dn{\mathcal L_n-\N^{\otimes n}}\le4e_n$.
Absorbing the factor four into the prefactor proves
\eqref{eq:subchannel}.
\end{proof}

We express the resulting rate through max-relative entropy. Following
the order-based definition for states~\cite{Datta} and its channel
extension~\cite{WangWilde}, for a nonzero CP map $\Phi$ and a channel
$\Psi$ on the same spaces set
$\Dmax(\Phi\Vert\Psi)=\log\inf\{t>0:\Phi\cpo t\Psi\}$,
with value $+\infty$ if no finite $t$ is feasible.
For $0<\varepsilon<1$, define
\begin{equation}
 \Dmax^{\varepsilon,\le}(\N\Vert\M)
 =\inf_{\substack{\mathcal L\text{ a subchannel}\\
                  \dn{\mathcal L-\N}\le\varepsilon}}
       \Dmax(\mathcal L\Vert\M).
 \label{eq:smoothed-dmax}
\end{equation}
The superscript $\le$ records that trace loss is allowed.
Since $\dn\N=1$, the condition $\varepsilon<1$ excludes the zero
map; finite values of \eqref{eq:smoothed-dmax} can nevertheless be
negative. This is a channel analogue of the smooth-entropy
asymptotics for states~\cite{TCR,DMHB}, with the normalization
constraint on the approximant treated separately~\cite{Gour}.

The dominated subchannels give the upper bound on the smoothing rate.
For the lower bound, closeness to the null channel forces every feasible
approximant to preserve the acceptance probability of a good test, up to the
smoothing error.
Together these observations give the following asymptotic identity.

\begin{corollary}[Subchannel asymptotic equipartition]\label{cor:aep}
Let $\N,\M:A\to B$ be channels with $d=D^\infty(\N\Vert\M)<\infty$.
For every $0<\varepsilon<1$,
\begin{equation}
 \lim_{n\to\infty}\frac1n
 \Dmax^{\varepsilon,\le}(\N^{\otimes n}\Vert\M^{\otimes n})=d.
 \label{eq:subchannel-aep}
\end{equation}
\end{corollary}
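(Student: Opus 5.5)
The proof splits into an upper bound coming from \cref{cor:subchannel} and a lower bound coming from the state Stein lemma applied to block inputs.

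\emph{Upper bound.} Fix $S>d$ and take the subchannels $\mathcal L_n$ of \cref{cor:subchannel}. Their diamond error is at most $Ke^{-\gamma n}\le\varepsilon$ for all large $n$, so each $\mathcal L_n$ is feasible in \eqref{eq:smoothed-dmax}. The domination $\mathcal L_n\cpo2^{nS}\M^{\otimes n}$ gives $\Dmax(\mathcal L_n\Vert\M^{\otimes n})\le nS$. Hence the limit superior of the normalized smoothed quantity is at most $S$, and letting $S\downarrow d$ gives $\limsup\le d$. This direction is immediate once \cref{cor:subchannel} is available.

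\emph{Lower bound.} Fix $r<d$; we may assume $d>0$. As in the proof of \cref{cor:threshold}, choose a block length $k$ and a pure input with outputs $\rho,\sigma$ satisfying $D(\rho\Vert\sigma)>kr$. Pick $u$ with $kr<u<D(\rho\Vert\sigma)$. Using the padding construction of \eqref{eq:block-direct}, the direct part of the state Stein lemma yields, at total blocklength $n$, an input $\ket\psi$ and an effect $T$ with null acceptance $p_n\to1$ and alternative acceptance $q_n\le2^{-mu}$, where $m=\lfloor n/k\rfloor$.

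Now let $\mathcal L$ be any feasible subchannel with $\Dmax(\mathcal L\Vert\M^{\otimes n})=\log t$. We may assume this is finite and take $t$ slightly above the infimum. Then $\mathcal L\cpo t\M^{\otimes n}$, and applying $\id_R\otimes(\cdot)$ to $\ket\psi\bra\psi$ and testing with $T$ gives $\Tr T(\id\otimes\mathcal L)(\psi)\le tq_n$.

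On the other hand, the diamond-norm condition in its unhalved form gives $|\Tr T(\id\otimes\mathcal L)(\psi)-p_n|\le\varepsilon$. The reason is that for a Hermitian $X$ and an effect $0\le T\le I$ one has $|\Tr TX|\le\trn{X}$. This is the step requiring care: the unhalved normalization does not directly give $|\Tr TX|\le\trn X/2$, because $\Tr X$ need not vanish for a subchannel difference. Still, $|\Tr TX|\le\trn X$ always holds, and that is enough here. It yields
\[
 tq_n\ge p_n-\varepsilon.
\]
Since $p_n\to1$ and $\varepsilon<1$, the right side eventually exceeds $(1-\varepsilon)/2>0$. Therefore
\[
 \log t\ge\log\tfrac{1-\varepsilon}2+mu,
\]
uniformly over feasible $\mathcal L$.

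Dividing by $n$ and using $m/n\to1/k$ with $u/k>r$ gives $\liminf\ge r$. Letting $r\uparrow d$ gives $\liminf\ge d$, which proves \eqref{eq:subchannel-aep}. The only real subtlety is the normalization point just discussed: because $\varepsilon<1$ and the conventions are unhalved, the bound $|\Tr TX|\le\trn X$ is the one to use.
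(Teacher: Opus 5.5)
Your proof is correct and is essentially the paper's argument. The upper bound comes from \cref{cor:subchannel}, and the lower bound comes from the chain $p-\varepsilon\le\Tr T(\id_R\otimes\mathcal L)(\ket\psi\bra\psi)\le tq$; the paper packages this as the one-shot bound $\log t\ge-\log\beta_n^*(\eta)+\log(1-\eta-\varepsilon)$ with $\eta=(1-\varepsilon)/2$ plus the achievability half of \cref{prop:parallel}, whereas you inline that block construction directly (your aside ``we may assume $d>0$'' deserves one line, since smoothed values can be negative: the test $T=I$ gives $t\ge1-\varepsilon$, hence $\liminf\ge0$).
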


The converse follows from a one-shot estimate: for every $n\ge1$ and
$0<\eta<1-\varepsilon$, the quantity
$\Dmax^{\varepsilon,\le}(\N^{\otimes n}\Vert\M^{\otimes n})$
is at least $-\log\beta_n^*(\eta)+\log(1-\eta-\varepsilon)$.
We establish this bound as part of the proof.

\begin{proof}
Let $\mathcal L$ be a feasible subchannel in
\eqref{eq:smoothed-dmax}, and take any $t>0$ with
$\mathcal L\cpo t\M^{\otimes n}$. For any pure input $\ket\psi$ and
effect $T$, let $p,q$ be the acceptance probabilities under
$\N^{\otimes n},\M^{\otimes n}$. Diamond-norm closeness and CP
domination imply
\begin{equation*}
 p-\varepsilon
 \le\Tr T(\id_R\otimes\mathcal L)(\ket\psi\bra\psi)\le tq.
\end{equation*}
If $p\ge1-\eta$, then $q\ge(1-\eta-\varepsilon)/t$.
Taking the infimum over all such inputs and tests gives
$\beta_n^*(\eta)\ge(1-\eta-\varepsilon)/t$, and therefore
$\log t\ge-\log\beta_n^*(\eta)+\log(1-\eta-\varepsilon)$.
Taking infima over feasible $t$ and then over $\mathcal L$, with the
convention $\inf\varnothing=+\infty$, proves the one-shot estimate.
Fixing $\eta=(1-\varepsilon)/2$ and applying
\cref{prop:parallel} gives the asymptotic lower bound $d$.

For the upper bound, take any $S>d$.
\cref{cor:subchannel} gives, at all sufficiently large $n$,
a feasible subchannel with max-relative entropy at most $nS$.
Thus the limit superior is at most $S$. Let $S\downarrow d$ to
obtain \eqref{eq:subchannel-aep}.
\end{proof}

The same argument allows a varying smoothing radius. If its logarithmic
inverse is sublinear in $n$, the exponentially accurate approximants
are eventually feasible at that radius.

\begin{corollary}[Subexponential smoothing errors]
\label{cor:vanishing-smoothing}
Let $\N,\M:A\to B$ be channels with $d=D^\infty(\N\Vert\M)<\infty$.
Suppose $0<\varepsilon_n\le\varepsilon_*<1$ for a fixed
$\varepsilon_*$, and $\log(1/\varepsilon_n)=o(n)$. Then
\begin{equation}
 \lim_{n\to\infty}\frac1n
 \Dmax^{\varepsilon_n,\le}(\N^{\otimes n}\Vert\M^{\otimes n})=d.
 \label{eq:vanishing-smoothing-aep}
\end{equation}
\end{corollary}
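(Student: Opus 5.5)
We follow the two halves of the proof of \cref{cor:aep} and check that each survives when the fixed radius $\varepsilon$ is replaced by $\varepsilon_n$.

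\emph{Upper bound.} Fix $S>d$. \cref{cor:subchannel} gives subchannels $\mathcal L_n\cpo2^{nS}\M^{\otimes n}$ with $\dn{\mathcal L_n-\N^{\otimes n}}\le Ke^{-\gamma n}$ for all large $n$. Since $\log(1/\varepsilon_n)=o(n)$, we have $\varepsilon_n\ge e^{-\gamma n/2}$ eventually. Hence $Ke^{-\gamma n}\le\varepsilon_n$ for all large $n$, and $\mathcal L_n$ is feasible in \eqref{eq:smoothed-dmax} at radius $\varepsilon_n$. Moreover $\mathcal L_n\neq0$, because $\varepsilon_n<1=\dn{\N^{\otimes n}}$. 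Therefore $\Dmax^{\varepsilon_n,\le}(\N^{\otimes n}\Vert\M^{\otimes n})\le nS$ eventually. This gives $\limsup\le S$, and letting $S\downarrow d$ gives $\limsup\le d$. This is the only place the hypothesis $\log(1/\varepsilon_n)=o(n)$ is used: it is exactly what makes the exponentially accurate approximants eventually admissible.

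\emph{Lower bound.} Monotonicity in the radius reduces the problem to the fixed case. Enlarging the smoothing radius enlarges the feasible set, so $\varepsilon_n\le\varepsilon_*$ gives
\[
\Dmax^{\varepsilon_n,\le}(\N^{\otimes n}\Vert\M^{\otimes n})\ge\Dmax^{\varepsilon_*,\le}(\N^{\otimes n}\Vert\M^{\otimes n}).
\]
Applying \cref{cor:aep} at the fixed tolerance $\varepsilon_*<1$ yields $\liminf\ge d$. Alternatively, one can use the one-shot estimate directly with $\eta=(1-\varepsilon_*)/2$: the additive term $\log(1-\eta-\varepsilon_n)\ge\log((1-\varepsilon_*)/2)$ is bounded, and \cref{prop:parallel} gives $-\tfrac1n\log\beta_n^*(\eta)\to d$.

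Combining the two bounds proves \eqref{eq:vanishing-smoothing-aep}. There is no real obstacle; the only care needed is the bookkeeping in the upper bound that compares $Ke^{-\gamma n}$ with $\varepsilon_n$.
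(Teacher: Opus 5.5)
Your proposal is correct and follows the paper's proof essentially verbatim. For the lower bound you use containment of feasible sets together with \cref{cor:aep} at the fixed radius $\varepsilon_*$. For the upper bound you note that $\log(1/\varepsilon_n)=o(n)$ makes the exponentially accurate dominated subchannels from \cref{cor:subchannel} eventually feasible at radius $\varepsilon_n$, and then let $S\downarrow d$.
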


\begin{proof}
Since $\varepsilon_n\le\varepsilon_*$, the feasible set at radius
$\varepsilon_n$ is contained in that at radius $\varepsilon_*$.
\cref{cor:aep} therefore gives the lower bound.
For any $S>d$, \cref{cor:subchannel} supplies subchannels
with domination $2^{nS}$ and error $Ke^{-\gamma n}$.
The assumption on $\varepsilon_n$ implies
$\ln(1/\varepsilon_n)=o(n)$; hence
$Ke^{-\gamma n}\le\varepsilon_n$ for all sufficiently large $n$.
These subchannels are feasible at radius $\varepsilon_n$, giving a
limit superior at most $S$. Letting $S\downarrow d$ proves the claim.
\end{proof}

The sequence of radii need not be monotone or converge to zero.
Trace loss, however, is essential to the class of approximants used
here. Rescaling gives a contraction, and completing the associated
subchannel to a trace-preserving map need not preserve the domination
rate. The asymptotic equipartition property with trace-preserving
channel smoothing in fact fails in general~\cite{Gour}.

\section{Discussion}\label{sec:discussion}

We have established $D^\infty(\N\Vert\M)$ as the fixed-error Stein
threshold under arbitrary adaptive control. The proof also gives a way
to pass from a weak testing bound to exponential accuracy without first
identifying the optimizing input states. Once the weak bound is
converted into a uniform environmental approximation, finite CP
domination supplies an exact representation of the residual. The tensor
constructions can then improve the error at an arbitrarily small
increase in cost rate. This explains how the same weak estimate
leads to the parallel strong converse, R\'enyi continuity, and the
adaptive converse.

The main quantitative question is how the approximation error depends
on the rate gap above $d$. In \cref{lem:exponential}, the error exponent
is $\gamma=(\ln2)/k$, where the blocklength $k$ is chosen after fixing
the rate gap. This proves positivity for every fixed gap, but the
construction does not optimize the required blocklength. Sharper control
of this dependence would give quantitative continuity bounds through
\eqref{eq:renyi-quantitative}, and hence explicit lower bounds on the
strong-converse exponent near the threshold. It would also give bounds
on the domination rate needed for a prescribed exponential smoothing
error, extending the regime covered by
\cref{cor:vanishing-smoothing}.

\section*{Acknowledgment}
OpenAI Codex was used in proof development, mathematical checking,
literature work, and preparation of the source manuscript, and in
expanding, reorganizing, and typesetting the present exposition.


\begingroup
\small
\raggedright

\endgroup
\end{document}